\title{More Efforts Towards Fixed-Parameter Approximability of Multiwinner Rules}
\author[1]{Sushmita Gupta}
\author[2]{Pallavi Jain}
\author[1]{Souvik Saha}
\author[1]{Saket Saurabh}
\author[1]{Anannya Upasana}
\affil[1]{The Institute of Mathematical Sciences, HBNI}
\affil[ ]{\texttt{\{sushmitagupta,souviks,saket,anannyaupas\}@imsc.res.in}}
\affil[2]{Indian Institute of Technology Jodhpur}
\affil[ ]{\texttt{pallavij@iitj.ac.in}}
\newtheorem{lemma}{Lemma}
\newtheorem{theorem*}[lemma]{Theorem*}
\newtheorem{corollary}[lemma]{Corollary}
\newtheorem{definition}[lemma]{Definition}
\newtheorem{claim}{Claim}
\newtheorem{proposition}[lemma]{Proposition}
\newtheorem{reduction rule}{Reduction Rule}
\newcommand{\hidestuff}[1]{}
\newcommand{\hide}[1]{}
\newcommand{\shortv}[1]{}
\newcommand{\approval}{approval list\xspace}
\newcommand{\Iden}{\ensuremath{{\rm Iden}}}
\newcommand{\cO}{\mathcal{O}}
\newcommand{\cA}{\mathcal{A}}
\newcommand{\cB}{\mathcal{B}}
\newcommand{\cC}{\mathcal{C}}
\newcommand{\cF}{\mathcal{F}}
\newcommand{\Ceil}[1]{\ensuremath{\left\ceil{#1\right}}}
\newcommand{\OO}{{\mathcal O}}
\definecolor{mycolor}{rgb}{0.122, 0.435, 0.698}
\newmdenv[innerlinewidth=0.02pt, roundcorner=4pt,linecolor=mycolor,innerleftmargin=6pt,
innerrightmargin=6pt,innertopmargin=6pt,innerbottommargin=6pt]{mybox}
\newmdenv[innerlinewidth=0.5pt, roundcorner=4pt,linecolor=black,innerleftmargin=6pt,
innerrightmargin=6pt,innertopmargin=6pt,innerbottommargin=6pt]{myboxblack}
\newmdenv[innerlinewidth=0.5pt, roundcorner=4pt,linecolor=mycolor,innerleftmargin=6pt,
innerrightmargin=6pt,innertopmargin=6pt,innerbottommargin=6pt]{myboxthick}
\newcommand{\no}{{\sf no}\xspace}
\newcommand{\yes}{{\sf yes}\xspace}
\newcommand{\opt}{{\sf OPT}\xspace}
\newcommand{\fpt}{{\sf FPT}\xspace}
\newcommand{\kddfree}{{$K_{d,d}$-free}\xspace}
\newcommand{\kdd}{{$K_{d,d}$}\xspace}
\newcommand{\whard}{\textsf{W[1]}-hard\xspace}
\newcommand{\wtwohard}{\textsf{W[2]}-hard\xspace}
\newcommand{\nph}{\textsf{NP}-hard\xspace}
\newcommand{\igraph}{profile graph\xspace}
\newcommand{\C}{{{\cal C}}\xspace}
\newcommand{\V}{{{\cal V}}\xspace}
\newcommand{\abc}{{\texttt{ABC}}\xspace}
\newcommand{\defparprob}[4]{
\begin{tcolorbox}[ colback=gray!5!white,colframe=gray!75!black]
  \vspace{-1mm}
  \begin{tabular*}{\textwidth}{@{\extracolsep{\fill}}lr} #1  & {\bf{Parameter:}} #3 \\ \end{tabular*}
  {\bf{Input:}} #2  \\
  {\bf{Question:}} #4
  \vspace{-1mm}
\end{tcolorbox}
}
\newcommand{\owa}{{\sf OWA}\xspace}
\newcommand{\probone}{{\sc SM-MwE}\xspace}
\newcommand{\probonegen}{{\sc SM-MwE}\xspace}
\newcommand{\solution}{{\sf Thiele Committee}\xspace}
\newcommand{\score}{{\sf sco}\xspace}
\newcommand{\probonemax}{{\sc Max SubMod-MwE}\xspace}
\newcommand{\kabfree}{{$K_{a,b}$-free}\xspace}
\newcommand{\floor}[1]{\lfloor #1 \rfloor}
\newcommand{\ceil}[1]{\lceil #1 \rceil}
\newcommand{\appa}{\ensuremath{{\sf APPA}}\xspace}
\newcommand{\optapx}{{\sf ApxOPT}\xspace}
\newcommand{\core}{{\sf Co}\xspace}
\newcommand{\fptas}{{\sf FPT-AS}\xspace}
\newcommand{\oneadapp}{{one-additive approximation}\xspace}
\newcommand{\addapxth}{{\sf Add}-{\sf Apx}-{\sf MwE}\xspace}
\newcommand{\apxvot}{{\sf Apx}-{\sf MwE}\xspace}
\newcommand{\sse}{\subseteq}
\newcommand{\Co}[1]{\ensuremath{\mathcal{#1}}}
\newcommand{\sm}{\setminus\!}
\newcommand{\contribution}[2]{\ensuremath{\lambda_{#1}^{#2}}}
\newtheorem*{lem*}{Lemma}
\newcommand{\shortcite}[1]{\citeyear{#1}}
\newtheorem{theorem}{Theorem}
\date{}
\begin{document}

\maketitle

\begin{abstract}

{\sc Multiwinner Elections} 
have emerged as a prominent area of research with numerous practical applications. We contribute to this area by designing parameterized approximation algorithms and also resolving an open question by Yang and Wang [AAMAS'18]. More formally, given a set of candidates, \( \mathcal{C} \), a set of voters,  \( \mathcal{V} \), approving a subset of candidates (called approval set of a voter), and an integer $k$,  we consider the problem of 
selecting a ``good'' committee using Thiele rules. 
This problem is computationally challenging for most Thiele rules with monotone submodular satisfaction functions, as there is no \((1-\frac{1}{e}-\epsilon)\)\footnote{Here, $e$ denotes the base of the natural logarithm.}-approximation algorithm in \( f(k)(|\mathcal{C}| + |\mathcal{V}|)^{o(k)} \) time for any fixed \(\epsilon > 0\) and any computable function $f$, and no {\sf PTAS} even when the length of approval set is two. Skowron [WINE'16]  designed an approximation scheme running in FPT time parameterized by the combined parameter, \emph{size of the approval set} and $k$. In this paper, we consider a parameter $d+k$ (no \( d \) voters approve the same set of \( d \) candidates), where $d$ is upper bounded by the size of the approval set (thus, can be much smaller). 
  With respect to this parameter, we design parameterized approximation schemes, a lossy polynomial-time preprocessing method, and show that an extra committee member suffices to achieve the desired score (i.e., $1$-additive approximation). Additionally, we resolve an open question by Yang and Wang~[AAMAS'18] regarding the fixed-parameter tractability of the problem under the PAV rule with the total score as the parameter, demonstrating that it admits an FPT algorithm.

\end{abstract}

\section{Introduction}

{\sc Multiwinner Election} is one of the well-studied problems in computational social choice theory~\cite{Ballotpedia,16de4c2a459441008078dd35182e783f,DBLP:journals/scw/ElkindFSS17,DBLP:conf/ijcai/PierczynskiS19}; and the most extensively studied and commonly implemented in practice is the approval-based model of election~\cite{DBLP:conf/aaai/SkowronF15,DBLP:journals/jair/SkowronF17,DBLP:journals/iandc/Skowron17,DBLP:journals/jet/LacknerS21,DBLP:series/sbis/LacknerS23,DBLP:conf/ijcai/DoHL022}.
In general, a multi-winner election with approval preferences consists of a set of $m$ candidates ($\C$), a set of $n$ voters ($\V$), each providing a set of approved candidates $A_v \subseteq \C$,  a satisfaction (or scoring) function $\score:2^{\C} \rightarrow \mathbb{Q}_{\geq 0}$, and an integer $k$. The set of \approval of all voters is called the \emph{approval profile} denoted by $A=\{A_v \colon v\in {\cal V}\}$. 
The goal here is to select a subset (called a {\it committee}) $S$ of $k$ candidates that maximizes the value $\score(S)$. In the decision version the goal is to check if $\score(S)\geq t$ for a given value $t$. The definition of the $\score(\cdot)$ function depends on the voting rule we employ.  In this article, we consider a subclass of approval-based voting rules (known as the \abc voting rules). An important class of \abc rules is the one defined by Thiele~\cite{DBLP:series/sbis/LacknerS23} (also known as \emph{generalised approval procedures}). Some of the well-known Thiele rules are approval voting, Chamberlin-Courant, and proportional approval voting(PAV)[~\cite{RePEc:cup:apsrev:v:77:y:1983:i:03:p:718-733_24,janson2016phragmen}]. 
A Thiele rule is given by a function $f\colon \mathbb{N}\cup \{0\} \rightarrow \mathbb{Q}_{\geq 0}$ where $f(0)=0$. For example, under the Chamberlin-Courant rule, \( f(i) = 1 \) for each \( i > 0 \), while for the Proportional Approval Voting (PAV) rule, \( f(i) = \sum_{j=1}^{i} \frac{1}{j} \) for each \( i > 0 \). Given a profile $A$, the score of a committee $S \subseteq C$ is defined by $\score(S) = \sum_{v\in \V}  f(|S\cap A_v|)$. Since our paper deals with different functions, as well as different approval profiles, for the sake of clarity, we denote the score function as $\score_f(A,S) = \sum_{v\in \V}  f(|S\cap A_v|)$. We leave $f$ and $A$ from the notation if it is clear from the context.

\medskip

\noindent 
{\bf Context of Our Results.} In general, the {\sc  Multiwinner Election} problem, aka  the {\sc Committee Selection} problem, is \nph. In fact, the problem is also intractable in the realm of parameterized complexity, \wtwohard with respect to $k$, the size of the committee. That is, we do not expect \hide{the problem to admit} an algorithm with running time $h(k)(n+m)^{\cO(1)}$. In fact, these intractability results carry over even for special cases. 
In particular, given a Thiele function $f\colon \mathbb{N}\cup \{0\} \rightarrow \mathbb{Q}^{+}$, for each voter $v\in \V$, we can  associate a {\em satisfaction function} with each committee, defined as $f_v:2^{\C} \rightarrow \mathbb{Q}^{+}$ where $f_v(S)=f(|S\cap A_v|)$. In this notation, the score of $S$ is given by $\score_f(A,S) = \sum_{v\in \V}  f_v(S)$. When the function $f_v$ is monotone and submodular, for each voter $v \in V$, we call the problem {\sc Submodular  Multiwinner Election} (\probonegen).
It is known that \probonegen is \nph as well as \wtwohard with respect to parameter $k$ \cite{DBLP:conf/atal/AzizGGMMW15,DBLP:journals/aamas/YangW23}. 
Aziz et el.~\shortcite{DBLP:conf/atal/AzizGGMMW15} shows that the problem under PAV rule remains \wtwohard even for the special case where each voter approves at most two candidates. A similar hardness for the Chamberlin-Courant rule was shown by~\cite{DBLP:journals/aamas/YangW23}. They also show that the problem for both Chamberlin-Courant rule and PAV remains  \whard when parameterized by $(|\cC|-k)=m-k$ even when every voter approves two candidates. They also show that under Chamberlin-Courant rule, the problem admits an \fpt algorithm parameterized by $t$ and under PAV, the problem is \fpt parameterized by the combined parameter $t$ and maximum size of the approval list of a voter. Here, $t$ is the threshold value in the decision version of the problem. However, they state that the \fpt membership of the problem under the PAV rule with respect to $t$ as an open question. We address this by providing an \fpt algorithm for \probonegen, parameterized by $t$ and the committee size. We then show that the same algorithm yields an \fpt algorithm parameterized by $t$ for the problem under the PAV rule. Our approach employs the color-coding technique introduced by \cite{DBLP:journals/jacm/AlonYZ95}, and the detailed algorithm is presented in Section~\ref{sec:parabyt}.

In the world of approximation algorithms, due to~\cite{DBLP:conf/soda/Manurangsi20} we know that we cannot hope to find a $k$-sized committee with score at least $(1-\frac{1}{e}-\epsilon)\opt$ even in time $f(k)(|\C|+|\V|)^{o(k)}$. Here, {\sf OPT} denotes the maximum score of a committee of size $k$. To mitigate these intractability results Skowron~\shortcite{DBLP:journals/iandc/Skowron17} considers special cases of this problem. In particular, he looks at the case where the approval list is bounded by an integer $\delta$, that is, every voter approves at most $\delta$ candidates. It was already known that even for $\delta=2$, \probonegen is \nph and \whard~\cite{DBLP:conf/atal/AzizGGMMW15,DBLP:journals/jair/BetzlerSU13}. This led Skowron to consider the existence of {\it parameterized approximation} that is, an approximation algorithm that runs in time $h(k,\delta) (n+m)^{\cO(1)}$ where $h$ is any computable function. In particular, he observed that the problem admits a $(1-\frac{1}{e})$-factor approximation in polynomial time. In addition, for each $\epsilon >0$, he presented, an approximation scheme that runs in time $h(k,d,\epsilon) (n+m)^{\cO(1)}$ and produces a $k$-sized subset $S$ such that $\score_f(A,S)\geq (1-\epsilon){\sf OPT}$. We call such an approximation scheme an \fptas and is the starting point of this work.

It is known that even when $\Delta_C =3$ (maximum number of voters approving the same candidate) and $\delta =2$ (maximum number of candidates approved  by the same voter) \probonegen is \nph \cite{DBLP:conf/ijcai/ProcacciaRZ07,DBLP:conf/atal/AzizGGMMW15}. We show that \probonegen is \fpt when parameterised by $k+\Delta_C$ (Section \ref{sec:parabyt} ).

\medskip

\noindent 
{\bf Our results and overview.} For our study of parameterized approximation algorithms, we consider a parameter $d$ smaller than $\delta$ as well as $\Delta_C$. Here, $d$ denotes the smallest number such that that no $d$ voters approve the same set of $d$ candidates. Clearly, $d\leq \delta$ as well as $d\leq \Delta_C$. Since it is a smaller parameter, it is worth considering. There are realistic scenarios where $d$ is indeed much smaller than $\delta$ and $\Delta_C$. Consider a university election of a 10-member committee from 200 candidates, with 5,000 students voting, where the votes are presumably based on personal connections and shared interests. The large diverse student body makes it unlikely for any \( d \) students to approve the same \( d \) candidates, thereby resulting in unique voting patterns.

For the ease of exposition, we consider the \igraph of approval-based elections. It is a bipartite graph $G=(\C,\V,E)$, where $V(G)=\C \uplus \V$, and $E$ is the edge set. Note that  $\C$ is the set of candidates, $\V$ is the set of voters. For a candidate $c\in \C$ and a voter $v\in \V$, we add an edge $cv$ if the voter $v$ approves the candidate $c$, i.e $c\in A_v$. 
In \igraph the approval list of voter $v$,$A_v$ is the set $\{c\in \C \colon vc \in E(G) \}$. Given a graph $G$ and a function $f_v$ for every $v\in \V$, the objective function is same as above, i.e., find a subset $S\subseteq \C$ of size $k$  that maximises $\score_f(G,S)=\sum_{v\in \V}f_v(|S\cap A_v|)$. The graph $G$ is $K_{d,d}$-free (i.e., it does not contain a complete bipartite graph with $d$ vertices on each side as an induced subgraph).

Jain et al.~\shortcite{DBLP:conf/soda/0001KPSS0U23} consider the {\sc Maximum Coverage} problem (which is equivalent to Chamberlin-Courant based {\probonegen}) and give an \fptas with respect to the parameter $k+d$ when the \igraph is $K_{d,d}$-free. Manurangasi~\shortcite{DBLP:journals/tcs/Manurangsi25} designed a polynomial-time \emph{lossy kernel} for the same problem.

Similar to the results of Jain et al.~\shortcite{DBLP:conf/soda/0001KPSS0U23} for {\sc Maximum Coverage} we obtain the following set of results for \probonegen, when \igraph of the given instance is \kddfree. In the following a {\it solution
} refers to a $k$-sized committee.
  \begin{itemize}

 \item We present an \fptas parameterized by $k$. That is we give an algorithm that given $0 < \epsilon < 1$, runs in time $(\frac{dk}{\epsilon})^{\cO(d^2k)} (n+m)^{\cO(1)}$, and outputs a solution \hide{$k$-sized committee} whose score is at least $ (1-\epsilon)$ fraction of the optimum, \Cref{thm:fpt-apx}.

\item We complement \fptas, by designing a polynomial time {\it lossy kernel} with respect to $k$ for \probonegen. (Observe that a normal ``lossless'' kernel is not possible with respect to $k$, since the problem is \whard, and thus no \fpt algorithm and equivalently a kernel, may exist.) 
In other words, we present a polynomial-time algorithm that produces a graph $G'$ of size polynomial in $k+\epsilon$ from which we can find a solution that attains a $(1 - \epsilon)$ fraction of the \hide{maximum score of any $k$-sized committee} optimal score in the original instance, \Cref{thm:lossy-kernel}. Observe that in most practical scenarios, $k$ is some fixed small constant and thus searching for the desired committee in the reduced instance is quite efficient. Moreover, we also note that $G'$ represents the reality that only a small subset of voters and candidates actually matter! 

The starting point of our lossy kernel is the result of \cite{DBLP:journals/corr/abs-2403-06335}. In particular, the kernelization algorithm involves the following steps: it assesses the potential value of the score, and if this value is upper-bounded by a polynomial function of $k$ and $\epsilon$, we can then construct a lossless kernel. Else, we first define a notion that leads to a reduction rule for identifying candidates who are similar in terms of approximating the optimal score. Exhaustive application allows us to reduce the size of the candidate set to a polynomial function of $k$ and $\epsilon$. Finally, by applying another reduction rule that identifies distinct approval lists, we can reduce the number of voters, resulting in the desired lossy kernel.

\item We also present an \fpt approximation algorithm parameterized by $k,d$ that outputs a $k+1$-sized committee whose score is the same as the optimal solution of size $k$, \Cref{theorem:addapprox}.

\end{itemize}

\noindent In fact, our algorithm works even when we have different Thiele functions $h_v$ for each voter $v$. Then the associated {\em satisfaction function} $f_v:2^{\C} \rightarrow \mathbb{Q}^{+}$ can be represented as $f_v(S)=h_v(|S\cap A_v|)$. This strictly generalizes the known model as for each $v$, $h_v$ is the same.  Our results build on existing algorithms for \textsc{Max Coverage} and \textsc{Maximizing Submodular Functions}. However, due to the inherent generality of our problems, both in terms of the scoring function and the class of profiles, we must deviate significantly from known approaches in several crucial and key aspects.

\section{Our problem in \owa framework}\label{section:owaframework}

We will reformulate our problem within the Ordered Weighted Average (\owa) framework to align with existing results in the literature. We refer to Section~$4.1$ in the article of Skowron~\cite{DBLP:journals/iandc/Skowron17} for further details.  Given a set of candidates $\C$, a set of voters $\V$ and the \approval $A_{v}$ of every voter $v\in \V$, and a Thiele rule given by a non-decreasing function $f\colon \mathbb{N}\cup \{0\} \rightarrow \mathbb{Q}^{+}$ with $f(0)=0$, we define an \owa vector~$\lambda$ as follows. For every $i\geq 1$,
 $\lambda_i=f(i)-f(i-1)$. Then, we have $f(|S\cap A_{v}|)=\sum_{j=1}^{|S\cap A_{v}|}\lambda_{j}$. Thus, the scoring function can also be expressed as $\score_G(S)=\sum_{v \in \V}\sum_{j=1}^{|S\cap A_{v}|}\lambda_{j}$. Hence, every Thiele rule can be expressed by an \owa vector. For example the CC rule is represented by $\{1,0,\dots,0\}$ and PAV is represented by $\{1,\frac{1}{2},\dots,\frac{1}{k}\}$. Now, we make the following claim. 
 \begin{lemma}
\label{lem:submodular-non-increasing OWA equivalence}
    The functions $f_v$, $v\in \V$, is monotone and submodular if and only if the corresponding \owa vector $\lambda$ is non-increasing. Here, $f_v$ is the satisfaction function corresponding to the voter $v$ derived from $f$. 
 \end{lemma}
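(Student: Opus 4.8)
The plan is to express both ``monotone'' and ``submodular'' for each $f_v$ as elementary inequalities between consecutive entries of $\lambda$, using the identity $\lambda_i=f(i)-f(i-1)$ and the fact that adding one approved candidate to a set raises its intersection with $A_v$ by exactly one. The first step I would record is the \emph{marginal-gain formula}: for every voter $v$, every $S\subseteq\C$, and every candidate $c\notin S$,
\[ f_v(S\cup\{c\})-f_v(S)=\begin{cases}\lambda_{|S\cap A_v|+1}&\text{if }c\in A_v,\\ 0&\text{if }c\notin A_v.\end{cases} \]
This is immediate from the definitions, and everything else follows by reading inequalities off it.

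For the direction ``$\lambda$ non-increasing $\Rightarrow$ every $f_v$ monotone and submodular'' I would argue as follows. A Thiele rule is given by a non-decreasing $f$, so all $\lambda_i\ge 0$; combined with $f_v(S)=\sum_{j=1}^{|S\cap A_v|}\lambda_j$ and the monotonicity of $S\mapsto|S\cap A_v|$, this already gives monotonicity of $f_v$. For submodularity I would invoke the standard diminishing-returns characterization---it suffices to verify $f_v(S\cup\{c\})-f_v(S)\ge f_v(T\cup\{c\})-f_v(T)$ for all $S\subseteq T$ and $c\notin T$---and then apply the marginal-gain formula: when $c\notin A_v$ both sides vanish, and when $c\in A_v$ the inequality becomes $\lambda_{|S\cap A_v|+1}\ge\lambda_{|T\cap A_v|+1}$, which holds because $|S\cap A_v|\le|T\cap A_v|$ and $\lambda$ is non-increasing.

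For the converse, assume every $f_v$ is monotone and submodular and fix an index $i\ge 1$. If some approval set $A_v$ has $|A_v|\ge i+1$, pick $c\in A_v$ and nested sets $S\subseteq T\subseteq A_v\setminus\{c\}$ with $|S|=i-1$ and $|T|=i$; applying submodularity to the triple $(S,T,c)$ together with the marginal-gain formula yields exactly $\lambda_i\ge\lambda_{i+1}$. Ranging over all such $i$ shows that $\lambda$ is non-increasing on the range of indices that actually occur.

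The one point needing care---and the only genuine obstacle---is the bookkeeping about which coordinates of $\lambda$ the forward direction pins down: it produces $\lambda_i\ge\lambda_{i+1}$ only for indices $i$ small enough that some approval set can host two nested subsets of sizes $i-1$ and $i$ plus one further approved candidate, i.e.\ $i<\max_v|A_v|$. Since $|S\cap A_v|\le k$ for any size-$k$ committee, the coordinates beyond the largest approval-set size (in particular $\lambda_{k+1},\lambda_{k+2},\dots$) never influence any $f_v$ and are immaterial; I would therefore state and prove the equivalence for the coordinates that matter, and note that it extends verbatim to the whole vector once one quantifies over all conceivable approval profiles. The remainder is a routine unfolding of definitions plus the textbook fact that single-element, nested-set marginal comparisons characterize submodularity.
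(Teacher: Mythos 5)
Your proof is correct and follows essentially the same route as the paper: both translate marginal contributions of a single candidate into entries of $\lambda$ and read submodularity off as $\lambda_i\geq\lambda_{i+1}$, with your use of the diminishing-returns characterization being a standard equivalent of the four-point inequality the paper uses. If anything, your write-up is slightly more complete, since you explicitly dispose of the $c\notin A_v$ case, derive monotonicity from $\lambda_i\geq 0$, and flag the index-range bookkeeping, all of which the paper's proof leaves implicit.
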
 
\begin{proof}
    Consider a voter $v \in \V$ such that its satisfaction function $f_v$ is submodular. Suppose that the corresponding  \owa vector has entries such that $\lambda_i < \lambda_{i+1} $ for some $i\geq 1$. 
     Let $S$ denote a committee formed by taking exactly $i-1$ candidates from $A_v$. Let candidates $x_1,x_2 \in A_v$ such that $x_1,x_2 \notin S$. Thus, it follows that
\begin{align*}
f_v(S \cup \{x_1\}) + f_v(S \cup \{x_2\}) &= \sum_{i\in [2]}f(|S+x_i|\cap A_v)\\
&\hspace{-1.5cm} = 2\cdot f(i)= 2\cdot (f(i-1)+\lambda_{i})\\
& \hspace{-1.5cm}< f(i-1) + \lambda_{i} + \lambda_{i+1} + f(i-1)\\
& \hspace{-1.5cm} = f_v(S \cup \{x_1,x_2\})+f_v(S)
             \end{align*}
But this contradicts the submodularity of $f_v$. Hence, we can conclude that the \owa vector $\lambda$ is non-increasing.

Next, we will prove the other direction. For an arbitrary voter $v\in \V$, let $S$ denote a committee such that $|S\cap A_v|=i$ for some $i\in [|A_v|]$. 

Consider a subset $S \sse \C$ and a pair of distinct candidates $x_1, x_2 \in \C \sm S$. We will show that $f_v(S\cup \{x_1\}) + f_v(S\cup \{x_2\}) \geq f_v(S\cup \{x_1, x_2\}) + f_v(S)$. To argue this we note that $|S\cap A_v|=i$, for some $i\in [|A_v|]$.

\begin{align*}
\sum_{j\in [2]}f_v(S\cup \{x_j\})  &= 2 \cdot f(i+1)\\
& [\text{ because $x_1, x_2 \notin S$ and $x_1\neq x_2$.}]\\
& = 2\cdot (f(i)+ \lambda_{i+1})\\
&\geq 2\cdot f(i)+\lambda_{i+1} +\lambda_{i+2}\\
& =f_v(S \cup \{x_1,x_2\})+f_v(S)
\end{align*}
\end{proof}

\shortv{
  \begin{proof}Consider a voter $v \in \V$ such that its satisfaction function $f_v$ is submodular. Suppose that the corresponding  \owa vector has entries such that $\lambda_i < \lambda_{i+1} $ for some $i\geq 1$. 
     Let $S$ denote a committee formed by taking exactly $i-1$ candidates from $A_v$. Let candidates $x_1,x_2 \in A_v$ such that $x_1,x_2 \notin S$. Thus, it follows that
\begin{align*}
f_v(S \cup \{x_1\}) + f_v(S \cup \{x_2\}) &= \sum_{i\in [2]}f(|S+x_i|\cap A_v)\\
&\hspace{-1.5cm} = 2\cdot f(i)= 2\cdot (f(i-1)+\lambda_{i})\\
& \hspace{-1.5cm}< f(i-1) + \lambda_{i} + \lambda_{i+1} + f(i-1)\\
& \hspace{-1.5cm} = f_v(S \cup \{x_1,x_2\})+f_v(S)
             \end{align*}
But this contradicts the submodularity of $f_v$. Hence, we can conclude that the \owa vector $\lambda$ is non-increasing.

Next, we will prove the corollary. For an arbitrary voter $v\in \V$, let $S$ denote a committee such that $|S\cap A_v|=i$ for some $i\in [|A_v|]$. 

Consider a subset $S \sse \C$ and a pair of distinct candidates $x_1, x_2 \in \C \sm S$. We will show that $f_v(S\cup \{x_1\}) + u_v(S\cup \{x_2\}) \geq f_v(S\cup \{x_1, x_2\}) + f_v(S)$. To argue this we note that $|S\cap A_v|=i$, for some $i\in [|A_v|]$.

\begin{align*}
\sum_{j\in [2]}f_v(S\cup \{x_j\})  &= 2 \cdot f(i+1)\\
& [\text{ because $x_1, x_2 \notin S$ and $x_1\neq x_2$.}]\\
& = 2\cdot (f(i)+ \lambda_{i+1})\\
&\geq 2\cdot f(i)+\lambda_{i+1} +\lambda_{i+2}\\
& =f_v(S \cup \{x_1,x_2\})+f_v(S)
\end{align*}
\end{proof}
}

\noindent We design our algorithms for the scenario where each voter has their own Thiele function\footnote{Note that we are presenting algorithms for the general model, however, this is the first work even when the Thiele function is the same for every voter.}. Thus, instead of a single \owa\ vector, we work with a family  $\Lambda=\{\lambda^v ~|~v\in \V\}$ of \owa\ vectors. Throughout this paper, we assume that for any given non-increasing  \owa vector $\lambda^v=(\lambda_1^v,\lambda_2^v,\ldots, \lambda^v_{|A_v|} )$ we have $\lambda_1^v\leq 1$ for any $v \in \V$. Suppose not, then we can divide every $\lambda_i^v$ by $\lambda_{\max}=\max\{\lambda_1^v\mid v \in \V \}$ and change $t$ to $\frac{t}{\lambda_{\max}}$ where $t$ corresponds 
 to total score and is described in problem definition below.  Given a bipartite graph $G=(\C,\V,E)$ and a set $\Lambda$ of \owa vectors, $\lambda^v$, for every $v \in \V $, we define  a restriction  operation $\Lambda_S$ for every $S\subseteq \C$.

\begin{definition}
For any $j\in [|A_v|]$, we use $\lambda^v_{-j}$ to denote the (shortened) \owa vector obtained by deleting the $j$-sized prefix of $\lambda^v$. That is, $\lambda^v_{-j} =(\lambda^v_{j+1},\lambda^v_{j+2},\ldots, \lambda^v_{|A_v|})$.

For the set of \owa vectors $\Lambda$ and a subset of candidates $S\sse \Co{C}$, we define
$\Lambda_S = \{ \lambda^v_{-j}~: v \in \V \text{ and } ~j = |N(v) \footnote{ $N(\cdot)$ denotes the neighborhood } \cap S|\}$.\hide{Thus, it is the set of \owa vectors obtained from $\Lambda$ by removing the $|N(v)\cap S|$-sized prefix from $\lambda_v$, for each voter $v\in \Co{V}$.} In other words, $\Lambda_S$ is obtained by removing the $|N(v)\cap S|$-sized prefix from $\lambda^v$, if $v\in N(S)$; else we retain $\lambda^v$ intact. 

\end{definition}

Let $\lambda_{\min}$ denote $ \min\{\lambda^v_1 \mid v\in \V \}$
When $\Lambda, G$ are clear from the context, we will use $\score(S)$ or $\score_G(S)$ instead of $\score_G^\Lambda(S)$. For a set $O$ and a singletone set $\{x\}$ we sometimes omit the braces during set operations. For example, $O\backslash \{x\}$ and $O\backslash x$ represent the same set.

\section{\fptas for \probonegen}

For clarity, we state the problem here.

\defparprob{\probonegen}{A bipartite graph $G=(\C,\V,E)$, a set $\Lambda$ of  non-increasing vectors $\lambda^v=(\lambda_1^v,\lambda_2^v,\ldots, \lambda_{|A_v|}^v)$ for all $v\in \V$, and positive integers $k$ and $t$.}{$k$}{Does there exist $S\subseteq \C$ such that $|S|\leq k $ and $\score_G^\Lambda(S)=\sum_{v \in \V}f_{G,v}(S)\geq t$ where $f_{G,v}(S)=\sum_{j=1}^{|N_G(v)\cap S|}\lambda_j^v$ ? }

\paragraph{Overview of the algorithm.} We derive our algorithm by considering two cases: "low" threshold (value of $t$) and "high" threshold. For "low" threshold, we use a sunflower lemma-based reduction rule to reduce candidates. For "high" threshold, we discard all but a sufficiently large number of candidates with the highest $\score(\cdot)$ value. The formal description is presented in~\Cref{alg:combgenapxthn}.

\begin{algorithm}[t]
\caption{\apxvot : An \fpt-approximation scheme for \probonegen}\label{alg:combgenapxthn}
\textbf{Input:} A bipartite graph $G=(\C,\V,E)$, a set $\Lambda=\{\lambda^v: v\in \V\}$, integers $k,t$, and $\epsilon>0$   \\
\textbf{Output:} A $k$-sized subset $S\subseteq \C$ such that \hide{$|S|\leq k $ and} $\score_G(S)\geq (1-\epsilon)t$.
\begin{algorithmic}[1]
\Statex Let $r=\frac{4dk}{\epsilon\lambda_{\min}}+k$.
 \If{$t\leq\frac{2kr^d(d-1)}{(r-k)\epsilon}$}
 \State Apply Reduction Rule~\ref{redrule:sunflower} exhaustively to get $G'=(\C',\V)$.
 \State Search all $k$-sized subsets of $\C'$. Let $S\subseteq \C'$ that achieves the maximum \score.
  \If {$\score_{G}(S)\geq t$} \Return $S$.
  \Else { \Return \no-instance}.
  \EndIf
\EndIf
 \If{$t>\frac{2kr^d(d-1)}{(r-k)\epsilon}$}
  \State  Let $C_r$ denote the $\ceil{r}$ vertices of $\C$ with the highest $\score(\cdot)$-value.
  \State Let $S$ denote the $k$-sized subset of $\C_r$ with maximum $\score_{G}(\cdot)$ value.  \label{line:search}
    \If {$\score_{G}(S)\geq (1-\epsilon)t$} \Return $S$.
    \Else { \Return \no-instance}.
    \EndIf
\EndIf
\end{algorithmic}
\end{algorithm}

Next, we will define a sunflower that is used in our proof. In a bipartite graph $G(\C,\V, E)$, a subset $S\subseteq \C$ is said to form a {\it sunflower} if $N_G(x)\cap N_G(x')$ (i.e the ``approving set") are the same for all distinct candidates $x,x' \in S$. For a given sunflower $S$, we will refer to the common intersection of the neighborhood, $\cap_{x\in S} N(x)$ as $\core(S)$.

\begin{proposition}{\rm \cite{DBLP:journals/corr/abs-2403-06335}}\label{lemma:kabfreesunflower}{\rm [$K_{a,b}$-{\bf free sunflower}]}
    For any $w,l\in \mathbb{N}$, let $G((\C,\V),E)$ be a \kabfree bipartite graph such that every vertex in $\C$ has degree at most $\ell$ and $|\C|\geq a((w-1)\ell)^b$. Then, $G$ has a sunflower of size $w$. Moreover, a sunflower can be found in polynomial time.
\end{proposition}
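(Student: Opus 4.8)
The plan is to prove this by the classical ``pin down a popular vertex'' recursion for sunflowers, invoking the \kabfree assumption only to bound the recursion depth by $b$. First I would pass to the set-system view: identify each candidate $x\in\C$ with its neighbourhood $N_G(x)\subseteq\V$, a set of size at most $\ell$. For a fixed set $Y\subseteq\V$ and a candidate $x$ with $Y\subseteq N_G(x)$, call $T_Y(x):=N_G(x)\setminus Y$ the \emph{petal} of $x$ relative to $Y$. The one structural fact to record up front: if $x_1,\ldots,x_w$ all contain $Y$ in their neighbourhoods and the petals $T_Y(x_1),\ldots,T_Y(x_w)$ are pairwise disjoint, then $N_G(x_i)\cap N_G(x_j)=Y$ for every $i\neq j$; hence $\{x_1,\ldots,x_w\}$ is a sunflower with core $Y$ (petals are allowed to be empty, which also covers candidates with $N_G(x)=Y$).

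Next I would run the following process, maintaining a tentative core $Y$ and a surviving subfamily $\mathcal{F}_i\subseteq\C$ whose members all contain $Y$ in their neighbourhoods, with $|Y|=i$; start with $Y=\emptyset$ and $\mathcal{F}_0=\C$. At stage $i$, greedily build a maximal subfamily $\mathcal{D}\subseteq\mathcal{F}_i$ whose petals (relative to $Y$) are pairwise disjoint. If $|\mathcal{D}|\geq w$, output $w$ of its members and stop: by the fact above this is a sunflower with core $Y$. Otherwise $|\mathcal{D}|\leq w-1$, so the union $U$ of the petals of $\mathcal{D}$ satisfies $|U|\leq(w-1)\ell$, and by maximality every surviving candidate with a non-empty petal has its petal meeting $U$. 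Averaging over $U$, some vertex $v\in U$ lies in the petals of at least $|\mathcal{F}_i|/((w-1)\ell)$ survivors, up to an additive $O(w)$ error from candidates whose petal is already empty; those I would handle separately --- if $w$ of them have the same neighbourhood we already have a sunflower, and otherwise there are at most $w-1$ of them, contributing only a lower-order correction. Set $\mathcal{F}_{i+1}$ to be exactly these survivors and move $v$ into the core: $Y\leftarrow Y\cup\{v\}$, so $|Y|=i+1$.

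To conclude, $|\mathcal{F}_{i+1}|\geq|\mathcal{F}_i|/((w-1)\ell)$ (up to the lower-order correction), so as long as the process has not stopped, $|\mathcal{F}_i|\geq|\C|/((w-1)\ell)^i\geq a((w-1)\ell)^{b-i}$. If the process ever reached stage $b$, then the $|\mathcal{F}_b|\geq a$ surviving candidates are all adjacent to all $b$ vertices of $Y$, a copy of $K_{a,b}$ in $G$, contradicting that $G$ is \kabfree. Hence the process must halt with a sunflower of size $w$ (and core of size at most $b-1$) at some stage $i\leq b-1$. For the ``polynomial time'' claim, the process has at most $b\leq|\V|$ stages, each computing only a greedy maximal disjoint family and a most-frequent vertex, both polynomial in $|\C|+|\V|$.

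The obstacle I expect to be most delicate is not conceptual but arithmetic: making the averaging step clean in the presence of ``parallel'' candidates (equal neighbourhoods, i.e.\ empty petals) so that the recursion bound $|\mathcal{F}_i|\geq a((w-1)\ell)^{b-i}$ holds with the stated constant; and fixing the orientation of $K_{a,b}$ --- it is the candidate-degree bound $\ell$ that controls the branching factor and the voter-side forbidden size $b$ that caps the depth, which is precisely why the threshold takes the shape $a((w-1)\ell)^b$.
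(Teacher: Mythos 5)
This proposition is not proved in the paper at all; it is imported as a black box from the cited source, so there is no in-paper argument to compare against. Judged on its own, your proof is correct and is essentially the standard core-growing argument that underlies the cited $K_{a,b}$-free sunflower lemma: grow a tentative core $Y$ on the voter side, halt as soon as a maximal family of pairwise-disjoint petals has size $w$ (your observation that disjoint petals over a common $Y$ force all pairwise neighbourhood intersections to equal $Y$ exactly matches the paper's definition of a sunflower), and use \kabfree{}ness to cap the recursion depth at $b$. The arithmetic you flagged as delicate does close: every empty-petal candidate necessarily belongs to any maximal disjoint family, so in the non-halting case there are at most $w-1$ of them, giving $|\mathcal{F}_{i+1}|\geq (|\mathcal{F}_i|-(w-1))/((w-1)\ell)$; unrolling, $|\mathcal{F}_b|\geq |\C|/((w-1)\ell)^b-(w-1)\sum_{j\geq 1}((w-1)\ell)^{-j}> a-\frac{w-1}{(w-1)\ell-1}$, and for $\ell\geq 2$, $w\geq 2$ this exceeds $a-1$, so integrality gives $|\mathcal{F}_b|\geq a$ and the $K_{a,b}$ contradiction goes through with the stated threshold $a((w-1)\ell)^b$. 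For degenerate parameters the slack can exceed one, but there the statement itself already fails as written (e.g.\ $a=b=\ell=1$, $w=3$, and two isolated candidates satisfy all hypotheses yet admit no $3$-sunflower), and in this paper the proposition is only invoked with $a=b=d$ and large $\ell=W$, which also makes your choice of orientation of $K_{a,b}$ (candidates on the $a$ side, the core bounded by $b$ on the voter side) immaterial. The polynomial-time claim is as routine as you say: at most $b$ stages, each a greedy maximal-disjoint-petal computation plus a frequency count.
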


\begin{theorem}\label{thm:fpt-apx}
    There exists an algorithm running in time $(\frac{dk}{\epsilon})^{\mathcal{O}(d^2k)}n^{\mathcal{O}(1)}$ that given an instance of \probone, where the input graph is \kdd-free, outputs a solution $S$ such that $\score_G^{\Lambda} (S)\geq (1-\epsilon)t$.
\end{theorem}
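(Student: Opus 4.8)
The algorithm is \apxvot\ (\Cref{alg:combgenapxthn}), so the plan is to establish its correctness and running time. First I would dispose of two trivialities: by the normalisation preceding \Cref{alg:combgenapxthn} we may assume $\lambda_1^v\le 1$ for all $v\in\V$, and if some candidate $c$ has $\score_G(\{c\})\ge t$ we may simply output $\{c\}$ padded to size $k$ (valid since each $f_v$ is monotone), so assume $\score_G(\{c\})<t$ for every $c$; then $|N(c)|\,\lambda_{\min}\le\sum_{v\in N(c)}\lambda_1^v=\score_G(\{c\})<t$, giving a degree bound $|N(c)|<t/\lambda_{\min}$ that will control the kernel size. The two branches of the algorithm are then analysed separately.

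For the \emph{low-threshold} branch ($t\le\frac{2kr^d(d-1)}{(r-k)\epsilon}$), I would first prove that Reduction Rule~\ref{redrule:sunflower} is \emph{safe}, i.e.\ exhaustive application yields $G'=(\C',\V)$ on which the optimal $k$-committee value is unchanged: the ``$\le$'' direction is immediate, and for ``$\ge$'' one starts from an optimal committee and, whenever it meets a large sunflower, exchanges a committee member for a fresh petal of that sunflower, using that petal-neighbourhoods are disjoint outside $\core(\cdot)$ together with a counting bound on how many petals a size-$k$ committee can contain; hence some petal is always removable. After exhaustion, $G'$ is still $K_{d,d}$-free, has maximum degree $<t/\lambda_{\min}$, and contains no sunflower of the forbidden size, so \Cref{lemma:kabfreesunflower} (with $a=b=d$) bounds $|\C'|$ by a polynomial in the sunflower threshold and $t/\lambda_{\min}$; plugging in $r=\frac{4dk}{\epsilon\lambda_{\min}}+k$ and the case hypothesis gives $|\C'|\le(\tfrac{dk}{\epsilon\lambda_{\min}})^{\OO(d^2)}$. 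Searching all $\binom{|\C'|}{k}$ committees and comparing the best score with $t$ is then an exact decision procedure running in $(\tfrac{dk}{\epsilon\lambda_{\min}})^{\OO(d^2k)}n^{\OO(1)}$ time.

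The \emph{high-threshold} branch ($t>\frac{2kr^d(d-1)}{(r-k)\epsilon}$) is where the real work lies: I must show that if some $k$-committee has score $\ge t$, then the best $k$-subset $S$ of $C_r$ (which the branch returns) has $\score_G(S)\ge(1-\epsilon)t$. Fix an optimal committee $O$ with $\score_G(O)\ge t$; if $|\C|<\lceil r\rceil$ the branch is exact, so assume $|C_r|=\lceil r\rceil$ and let $\beta$ be the smallest singleton score over $C_r$, so every $c\notin C_r$ has $\score_G(\{c\})\le\beta$. Peeling off $O\setminus C_r$ marginal-by-marginal and using submodularity (\Cref{lem:submodular-non-increasing OWA equivalence}), $\score_G(O\cap C_r)\ge\score_G(O)-\sum_{o\in O\setminus C_r}\score_G(\{o\})\ge t-k\beta$; hence if $k\beta\le\epsilon t$ I would just pad $O\cap C_r$ inside $C_r$ and be done. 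Otherwise $\beta>\epsilon t/k$, i.e.\ all $\lceil r\rceil$ candidates of $C_r$ have singleton score above $\epsilon t/k$, and the plan is: (i) bound the cumulative overlap among candidates of $C_r$ using the $K_{d,d}$-free condition --- from $\sum_v\binom{|C_r\cap A_v|}{d}\le\binom{\lceil r\rceil}{d}(d-1)$ one gets that only $\OO(r^d(d-1))$ voters have $\ge d$ neighbours in $C_r$, and the loss of a committee drawn from $C_r$ relative to its members' singleton scores is bounded in these terms; (ii) since the singleton mass over $C_r$ is $\ge\lceil r\rceil\cdot\frac{\epsilon t}{k}$ and the hypothesis $t>\frac{2kr^d(d-1)}{(r-k)\epsilon}$ rearranges to $\frac{(r-k)\epsilon t}{k}>2r^d(d-1)$, the $\OO(r^d(d-1))$ overlap is negligible, so a carefully chosen $k$-subset $S\subseteq C_r$ --- obtained either greedily inside $C_r$, or by substituting the candidates of $O\setminus C_r$ with low-overlap candidates of $C_r\setminus O$ --- satisfies $\score_G(S)\ge(1-\epsilon)t$. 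It follows that the branch returns a valid $S$ on every \yes-instance and correctly reports a \no-instance otherwise.

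Finally, the high branch examines $\binom{\lceil r\rceil}{k}\le r^k=(\tfrac{dk}{\epsilon\lambda_{\min}})^{\OO(k)}$ committees, each evaluated in $n^{\OO(1)}$ time, which is dominated by the low branch; with the normalisation in place (and since $\lambda_{\min}$ is polynomially bounded in $k$ for the standard Thiele rules) the overall running time is $(\tfrac{dk}{\epsilon})^{\OO(d^2k)}n^{\OO(1)}$. I expect the main obstacle to be the high-threshold correctness in the sub-case $\beta>\epsilon t/k$: turning the $d$-fold intersection bound into a bound on the accumulated marginal loss while assembling a committee out of $C_r$, and verifying that the threshold on $t$ really dominates it. For $d=2$ this is essentially a one-line pairwise-overlap estimate; for general $d$ one must also separately control voters with between $2$ and $d-1$ committee neighbours, and that is where the bulk of the technical effort should sit.
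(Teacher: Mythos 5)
Your overall skeleton and the low-threshold branch do match the paper: safety of Reduction Rule~\ref{redrule:sunflower} via exchanging the deleted sunflower member for a candidate whose petal avoids $N(S)$, then \Cref{lemma:kabfreesunflower} to bound $|\C'|$ and brute force; and your sub-case ``$k\beta\le\epsilon t$'', handled by submodular peeling of $O\sm C_r$ and padding $O\cap C_r$ inside $C_r$, is correct. The problem is the remaining sub-case of the high-threshold branch, which you yourself flag as ``where the bulk of the technical effort should sit'': what you offer there is not a proof, and the accounting you propose would not close it. The loss of a $k$-subset of $C_r$ relative to the sum of its members' singleton scores is \emph{not} controlled by the number of voters with at least $d$ neighbours in $C_r$: voters with between $2$ and $d-1$ neighbours in the committee are completely unconstrained by \kdd-freeness, their number is bounded only by the (huge) degree sum of the committee, and so every $k$-subset of $C_r$ can fall far below its singleton mass. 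Moreover, ``singleton mass over $C_r$ is at least $r\epsilon t/k$'' is never compared to $\score(O)$, which is the benchmark the theorem requires; in this sub-case you only know $\beta>\epsilon t/k$, not that candidates outside $C_r$ are weak in absolute terms, so large singleton scores inside $C_r$ with heavy pairwise overlap are perfectly compatible with every $k$-subset of $C_r$ scoring far below $t$.

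The paper closes exactly this hole with an averaging argument that is absent from your sketch. Starting from $O$, it replaces the candidates $x_i\in O\sm C_r$ one at a time by the \emph{best} $y_i\in C_r\sm O_{i-1}$, and proves (Claim~\ref{claim:common-neighborhood}, using $\lambda^v_1\le 1$ and $\score(\{p\})\ge\score(\{x_i\})$ for $p\in C_r$) that if this best replacement loses $\alpha_{i-1}$, then \emph{every} candidate of $C_r\sm O_i$ has at least $\alpha_{i-1}$ neighbours in $N(O\cup O_\ell)$; it bounds $|N(O\cup O_\ell)|\le 2\score(O)/\lambda_{\min}$ (Claim~\ref{claim:neighborhood-size}); and it then counts edges between the at least $r-k$ candidates of $C_r\sm O_\ell$ and $N(O\cup O_\ell)$, splitting voters into those of degree below $d$ (total capacity $2d\,\score(O)/\lambda_{\min}$) and the rest (capacity $r^d(d-1)$ via \kdd-freeness). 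This yields $(r-k)\alpha_i\le 2d\,\score(O)/\lambda_{\min}+r^d(d-1)$, and summing over at most $k$ steps, with $r=\frac{4dk}{\epsilon\lambda_{\min}}+k$ and the case hypothesis on $t$, gives $\sum_i\alpha_i\le\epsilon\,\score(O)$, i.e.\ $O_\ell\subseteq C_r$ is a $(1-\epsilon)$-approximate committee found by the brute force over $C_r$. The essential idea you are missing is the division of that capacity by $r-k$: the low-multiplicity overlap (your ``$2$ to $d-1$'' voters) is tamed not by \kdd-freeness but by the bounded neighbourhood of the evolving solution averaged over the many available replacements in $C_r$; without introducing something playing the role of Claims~\ref{claim:common-neighborhood} and~\ref{claim:neighborhood-size}, your route (counting only voters with $\ge d$ neighbours in $C_r$, whether for a greedy or a substitution-based construction) cannot reach $\score(S)\ge(1-\epsilon)t$.
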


\begin{proof}
We will design and run two different algorithms for two possible cases, based on the value of $t$ (we call it {\it threshold}). \hide{and return the best output. We analyze each case separately and show that the best of the two yields the desired bounds, thereby establishing the correctness of the algorithm.} A brief description of \fptas that combines both cases is given in~\Cref{alg:combgenapxthn}. Recall that we defined $\lambda_{\min}=\min\{\lambda^v_1 \mid v\in \V \}$. Let $r=\frac{4dk}{\epsilon\lambda_{\min}}+k$. 

\begin{description}
    \item[Case 1:]\label{case1} The threshold $t\leq\frac{2kr^d(d-1)}{(r-k)\epsilon}$ 
    \item[Case 2:]\label{case2} The threshold $t >\frac{2kr^d(d-1)}{(r-k)\epsilon}$
\end{description}

\paragraph{Analysis of Case 1.} In this case, we will apply a modified sunflower lemma to reduce the number of candidates and then find an optimal solution using exhaustive search. Thus, for this case, we solve the problem optimally. 

We begin by observing that if there exists a vertex $v \in \C$ with $deg(v)\geq \frac{2kr^{d}(d-1)}{(r-k)\epsilon\lambda_{min}}$, then $\{v\}$ itself is a solution since \hide{(in Case 1)}$t\leq\frac{2kr^d(d-1)}{(r-k)\epsilon\lambda_{\min}}\lambda_{\min}$. This is because each of the neighbors of $v$ contribute at least $\lambda_{\min}$ to $\score_{G}(\{v\})$. Hence, $\score_{G}(\{v\}) \geq \frac{2kr^d(d-1)}{(r-k)\epsilon\lambda_{\min}}\lambda_{\min} \geq t$.
Thus, we may assume that for each $v\in \C$, $deg(v)< \frac{2kr^d(d-1)}{(r-k)\epsilon\lambda_{\min}}$.

Let $W=\frac{2kr^d(d-1)}{(r-k)\epsilon\lambda_{min}}$, the maximum degree of a vertex in $\C$. We apply the following reduction rule to the instance $\Co{I}=(G=(\C,\V,E),k,t,\Lambda)$ exhaustively.

\begin{reduction rule}\label{redrule:sunflower}
Use~\Cref{lemma:kabfreesunflower} on $G$, where $a=b=d$ and $\ell=W$. If a sunflower of size at least $w=Wk+1$ is found, then delete the vertex (candidate), say $u$, with the lowest $\score_G^\Lambda(\{u\})$ value in the sunflower (ties are broken arbitrarily). Return instance $\Co{I'}=(G'=(\C\backslash \{u\},\V),k,t,\Lambda)$.
\end{reduction rule}

 \noindent Lemma~\ref{lemma:optsame} proves the correctness of our reduction rule.
\begin{lemma}\label{lemma:optsame}
    $\Co{I}$  is a \yes-instance iff $\Co{I'}$ is a \yes-instance.
\end{lemma}
    \begin{proof}
If $\Co{I'}$ is a \yes-instance, then clearly $\Co{I}$ is a \yes-instance as well because for any solution $S$ in \Co{I'} we have $\score_{G}(S)= \score_{G'}(S)$. 

Next, for the other direction suppose that $\Co{I}=(G,k,t,\Lambda)$ is a \yes-instance. Let $S$ denote a solution to \Co{I}. Let $u$ denote the vertex in the sunflower $T$ that is deleted by the reduction rule. If $u \in \C\sm S$, then $S$ is a solution in \Co{I'}. 
 
 Suppose that $u\in S$. We will show that there is another candidate that can replace $u$ to yield a solution in \Co{I'}. Formally, we argue as follows. Since $|S|\leq k$, we have $|N(S)|\leq Wk$. For each candidate $x\in T$, we call the set $N(x) \sm \core(T)$ the {\it petal} of $x$. Since $|T|= Wk+1$, there are $Wk+1$ petals in \Co{V}. The voters in $N(S)$ can be present in at most $Wk$ petals. Thus, there is at least one petal, corresponding to some candidate $v\in T$,  that does not contain any vertex in $N(S)$. That is, $(N(v) \sm \core(T)) \cap N(S)= \emptyset$. Using this candidate $v$, we define the set $S'=  S\cup \{v\} \sm \{u\}$. We will next prove that $\score_{G'}(S') \geq \score_G(S)$. Consequently, it follows that $S'$ is a solution in \Co{I'}.

We begin the argument by noting that the petal of $v$, $N(v)\sm\core(T)\sse N(T)\sm N(S)$ and the contribution of the voters in the petal to $S'$ is $\sum_{x\in N(v)\sm \core(T)}\contribution{1}{x}$. The score of $S'$ consists of the score given by voters in $\V \sm (N(u) \cup N(v))$ whose contribution is unchanged between $S$ and $S'$, as are the contributions of the voters in $\core(T)$. The voters who experience a change are in $N(u)\sm \core(T)$ who have one fewer representative in $S'$, and those in $N(v)\sm \core(T)$, who contribute $\sum_{x\in N(v)\sm \core(T)}\contribution{1}{x} $. Following claim completes the proof.

\begin{claim}\label{claim:comparing-S'-to-S}
We show that 
$\score_{G'}(S') \geq \score_{G}(S)$
\end{claim}
\begin{proof}
We note that 
\begin{align*}
\score_{G'}(S')= & \sum_{x\in \V \sm (N(u) \cup N(v))}\sum_{j=1}^{|N(x)\cap S|}\contribution{j}{x} \\
&+ \sum_{x\in \core(T)}\sum_{j=1}^{|N(x)\cap S|}\contribution{j}{x} \\
& +  \sum_{x\in N(u)\sm \core(T)}\sum_{j=1}^{|N(x)\cap S|-1}\contribution{j}{x}+ \sum_{x\in N(v)\sm \core(T)}\contribution{1}{x}
\end{align*}

Next, we see that 
\begin{align*}
\score_{G}(S)= &\sum_{x\in \V \sm (N(u) \cup N(v))}\sum_{j=1}^{|N(x)\cap S|}\contribution{j}{x} +  \sum_{x\in \core(T)}\sum_{j=1}^{|N(x)\cap S|}\contribution{j}{x} \\
&  +  \sum_{x\in N(u)\sm \core(T)}\sum_{j=1}^{|N(x)\cap S|-1}\contribution{j}{x}\\
&+ \sum_{x\in N(u)\sm \core(T)} \contribution{|N(x)\cap S|}{x}
\end{align*}
By definition, $\score_{G}(\{v\}) \geq \score_{G}(\{u\})$. Hence, \[\sum_{x\in N(v)\sm \core(T)}\contribution{1}{x}
\geq \sum_{x\in N(u)\sm \core(T)}\contribution{1}{x} \geq \sum_{x\in N(u)\sm \core(T)} \contribution{|N(x)\cap S|}{x}.\] Therefore, $\score_{G'}(S') \geq \score_{G}(S)$.
\end{proof}

Hence, the lemma is proved.
\end{proof}

Exhaustive application of \Cref{redrule:sunflower} yields an instance \Co{I'} in which a sunflower of size $Wk+1$ does not exist. Then, according to \Cref{lemma:kabfreesunflower}, $|\C'| < d(W^2k)^{d}$ where $a=b=d$, $\ell=W$ and $w=Wk+1$.

\begin{claim} \label{claim:cchoosek}
 The quantity ${|\Co{C'}|\choose k} \leq (\frac{dk}{\epsilon})^{\Co{O}(d^2k)}$.   
\end{claim}
\begin{proof}
By sunflower lemma, we have $|\Co{C'}|\leq d((w-1)\ell)^d$ where $\ell$ is the bound on degree and $w$ is the size of the sunflower. We have $\ell=W=\frac{2kr^d(d-1)}{(r-k)\epsilon\lambda_{min}}$. Here $r=\frac{4kd}{\epsilon\lambda_{min}}+k$. Also the sunflower size $w=Wk+1$. Puting the values of $w,\ell$ in the equation $|\Co{C'}|\leq d((w-1)\ell)^d$ we get 
\begin{align*}
|\Co{C'}|&\leq  d((w-1)\ell)^d\\
&=d(Wk\ell)^d \hspace{20pt} [\text{because } w=Wk+1]\\
&=d(W^2k)^d \hspace{20pt} [\text{because }l=W]
\end{align*}
Now we separately evaluate $W$ first. The second equality is obtained by substituting the value of $r$.
\begin{align*}
    W&=\frac{2kr^d(d-1)}{(r-k)\epsilon\lambda_{min}}
    \hide{=\frac{2kr^d(d-1)}{(\frac{4dk}{\epsilon\lambda_{min}}+k-k)\epsilon\lambda_{min}}}
    =\frac{2kr^d(d-1)}{4dk}\\
    &=\frac{r^d(d-1)}{2d}
    =\frac{(\frac{4dk}{\epsilon\lambda_{min}}+k)^d(d-1)}{2d}
    =(\frac{dk}{\epsilon})^{\mathcal{O}(d)}
\end{align*}
Thus we have $|\Co{C'}| = (\frac{dk}{\epsilon})^{\mathcal{O}(d^2)} $ which implies $\binom{|\Co{C'}|}{k}=(\frac{dk}{\epsilon})^{\mathcal{O}(kd^2)}$
\end{proof}

Notice that \Cref{redrule:sunflower} can be implemented in time polynomial in $|\Co{I}|$, and the number of times it can be applied is also polynomial in $|\Co{I}|$. Hence, the running time in this case is   ${|\Co{C'}|\choose k}n^{\mathcal{O}(1)} = \left(\frac{dk}{\epsilon}\right)^{\Co{O}(d^2k)}n^{\mathcal{O}(1)}$.
This completes the analysis for Case 1. Next, we will analyze Case 2.

\paragraph{Analysis of Case 2.} 

We prove the existence of an approximate solution by showing that starting from an optimal solution, we can create our solution $O_\ell$ in a step-by-step fashion. \Cref{claim:common-neighborhood} allows us to bound for the $i^{th}$ step, the number of voters a top $r$-candidate may share with the candidates in $O_i$. This in turn implies that we can replace candidate $x_{i}$ by someone in $\Co{C}_r\sm O_i$ without too much loss in score. \Cref{claim:neighborhood-size} allows us to give a counting argument that yields that the difference $\score(O)-\score(O_\ell)=\sum_{i=1}^{\ell}\alpha_i \leq \epsilon\cdot \score(O)$. Let $O$ denote a solution for an instance \Co{I} of \probonegen, i.e., $\score(O)\geq t$. \hide{Let $\score(O)=\opt$.} Recall that $C_r$ is defined to be the set of $\ceil{r}$ candidates in $\C$ with the highest \score($\cdot$) value. If $O\sse C_r$, then the exhaustive search of Line~\ref{line:search} will yield the solution $O$. Therefore, without loss of generality, we may assume that $O \sm C_r \neq \emptyset$. 
Let $ O\backslash C_r=\{x_1,\ldots,x_{\ell}\}$, where $\ell \in [k]$.  We define $O_1=(O\backslash \{x_1\}) \cup \{y_1\}$ where $y_1 \in C_r \backslash O$ such that\hide{$\score(O_1)$  is maximum, that is,} $y_1$ minimizes the value $\score(O)-\score(O_1)$. Similarly, for any $i \in [\ell]$, we define $O_i=(O_{i-1} \sm \{x_i\}) \cup \{y_i\}$ where $y_i \in C_r \sm O_{i-1}$ such that $\score(O_{i-1})-\score(O_i)$ is minimum. For each $i\in [\ell-1]$, we define $\alpha_i=\score(O_{i})-\score(O_{i+1})$.

\begin{claim}\label{claim:common-neighborhood}Let $p$ be any candidate in $\C_{r} \sm O_i$. Then, for any $i\in [\ell]$, we have $|N(O_i) \cap N(p)| \geq \alpha_{i-1}$. 
\end{claim}

\begin{proof}
    From the definition, it follows that $\score^\Lambda(O_{i-1}\sm  \{x_i\} \cup \{p\})=\score^\Lambda(O_{i-1} \sm \{x_i\})+\score^{\Lambda_{O_{i-1}\sm \{x_i\}}}(p)$. The term $\score^{\Lambda_{O_{i-1}\sm \{x_i\}}}(p)$ captures the marginal contribution of the candidate $p$ when added to the set $O_{i-1} \sm \{x_i\}$, i.e., the marginal contribution of $p$ to $\score(O_{i-1} \sm \{x_i\} \cup \{p\})$.

Let $\score^{\Lambda_{O_{i-1}\backslash x_i}}(p)=\score_G^\Lambda(p)-Z$.

Another way of accounting for the marginal contribution of $p$ to $O_{i-1} \sm \{x_{i}\}$ is as follows. We note that $\score(O_{i-1} \sm \{x_i\} \cup \{p\})-\score(O_{i-1}\sm  \{x_i\})$ can be expressed as 
\begin{align*}
    &\score(O_{i-1}\sm  \{x_i\} \cup \{p\})-\score(O_{i-1}\sm \{x_i\})\\
    &= \sum_{v\in N(p)\cap N(O_{i-1}\backslash \{x_i\})}\lambda^v_{1+|N(v)\cap N(O_{i-1}\backslash \{x_i\})|} \\
    &+\sum_{v\in N(p)\sm N(O_{i-1}\sm \{x_i\}) }\lambda^v_1. 
\end{align*}

Thus, by equating the two expressions for the marginal contribution of $p$ to $\score(O_{i-1}\sm \{x_i\} \cup \{p\})$, we get 
\[\score(\{p\})-Z = \score(O_{i-1}\sm  \{x_i\} \cup \{p\})-\score(O_{i-1}\sm \{x_i\})\]

On further simplification we can bound 
\begin{align*}
 Z & =  \score(\{p\}) \\
 &- \left( \score(O_{i-1}\sm  \{x_i\} \cup \{p\}) -\score(O_{i-1}\sm \{x_i\}) \right) \\
 & =  \sum_{v\in N(p)} \lambda^v_1 -\\
& \left( \sum_{v\in N(p)\cap N(O_{i-1} \sm \{x_i\})}\lambda^v_{|N(p)\cap (O_{i-1}\sm \{x_{i}\})|+1} \right. \\
& \left.+ \sum_{v\in N(p)\sm N(O_{i-1}\sm \{x_i\}) }\lambda^v_1 \right) \\
 & =   \sum_{v\in N(p)\cap N(O_{i-1} \sm \{x_i\})} \left( \lambda^v_1 -  \lambda^v_{|N(p)\cap (O_{i-1}\sm \{x_{i}\})| +1}  \right) \\
& \leq  |  N(p) \cap N(O_{i-1} \sm \{x_i\}) |
\end{align*}
The last inequality is due to the fact that $\lambda^{v}_i\leq 1$ for each $v\in \V$ and $i\in [|A(v)|]$. 

Note that $N(O_{i-1} \sm \{x_i\}) \cap N(p)  \sse  N(O_{i-1}\sm\{x_i\} \cup \{y_i\})\cap N(p)$. Hence,  $Z\leq  | N(O_{i-1} \sm \{x_i\} \cup \{y_i\}) \cap N(p)|$.  Thus, it is sufficient to show that $\alpha_{i-1} \leq Z$.

Towards this, we begin by noting that 
\begin{align*}
      &\score(O_{i-1}\backslash \{x_i\} \cup \{p\})\\
       &= \score(O_{i-1}\backslash \{x_i\}) +(\score(\{p\})-Z)\\
      & = \score(O_{i-1}) - \\
      &\sum_{v \in N(x_i) \cap N(O_{i-1})} \lambda^v_{|O_{i-1}| } +(\score(\{p\})-Z)\\
      & \geq \score(O_{i-1}) - \sum_{v\in N(\{x_i\})} \lambda^v_{1}  + (\score(\{p\})-Z)\\
      &\geq \score(O_{i-1})-\score(\{x_i\})+\score(\{p\})-Z\\
\end{align*}

By rearranging we have,
\begin{align} 
\begin{split}
&Z -(\score(\{p\}) - \score(\{x_{i}\})) \\
&\geq \score(O_{i-1}) - \score(O_{i-1}\backslash \{x_i\} \cup \{p\}) \label{eq1}
\end{split}
\end{align}

By definition of $y_i$, we know that 
\begin{align}
\begin{split}
&\score(O_{i-1}) - \score(O_{i-1}\backslash \{x_i\} \cup \{p\})\\
& \geq \score(O_{i-1}) - \score(O_{i-1}\sm \{x_i\} \cup \{y_i\})  \\
& = \score(O_{i-1}) - \score(O_{i}) = \alpha_{i-1}  \label{eq2}
\end{split}
\end{align}

By combining \Cref{eq1,eq2} we get $Z - (\score(\{p\}) - \score(\{x_{i}\}) ) \geq \alpha_{i-1}$. We know that $\score(\{p\}) \geq \score(\{x_{i}\})$ since $p\in C_{r}$ and $x_{i}\in O\sm C_{r}$. Thus, it follows that $Z \geq \alpha_{i-1}$.
\end{proof}

\begin{claim} \label{claim:neighborhood-size}
$|N(O\cup O_{\ell})| \leq \frac{2 \cdot \score_{G}(O)}{\lambda_{\min}}$
\end{claim}
\begin{proof}
    Suppose that $|N(O)|>\frac{\score_{G}(O)}{\lambda_{\min}}$. Then, it follows that $\score_G(O)> \frac{\score(O)}{\lambda_{\min}} \lambda_{\min}=\score(O)$, a contradiction. A similar argument yields $|N(O_{\ell})| \leq \frac{\score(O)}{\lambda_{\min}}$. Hence,  $|N(O \cup O_\ell)| \leq \frac{2\score(O)}{\lambda_{\min}}$
\end{proof}

 Now consider the graph $G$ induced on $C_r\backslash O_{\ell}$ and $N(O \cup O_\ell )$. The number of edges incident on $C_r\backslash O_{\ell}$ is at least $(|C_r|-k)\alpha_{i}$ because for every $p \in C_r\backslash O_{\ell}$, we have $|N(O \cup O_\ell )\cap N(p)|\geq \alpha_{i}$ due to \Cref{claim:common-neighborhood}. The number of edges incident on $N(O \cup O_\ell )$ is at most $2d \cdot \score(O)/\lambda_{\min}+|C_r|^{d}(d-1)$, \Cref{claim:neighborhood-size}. This is because vertices with degrees at most $d$ can contribute $2d \cdot \score(O)/\lambda_{\min}$. Since the input graph is \kdd-free, the remaining vertices can contribute at most $\binom{|\C_r|}{d}(d-1)$ to the total number of incident edges. Using this inequality we prove the following claim.

\begin{claim}\label{claim:bounding-alpha} 
$\score(O) - \score(O_\ell) =\sum_{i=1}^{\ell} \alpha_i < \epsilon \cdot \score(O)$
\end{claim}

\begin{proof}
Consider the graph $G$ induced on $C_r\backslash O_{\ell}$ and $N(O \cup O_\ell )$. The number of edges incident on $C_r\backslash O_{\ell}$ is at least $(|C_r|-k)\alpha_{i}$ because for every $p \in C_r\backslash O_{\ell}$, we have $|N(O \cup O_\ell )\cap N(p)|\geq \alpha_{i}$ due to \Cref{claim:common-neighborhood}. The number of edges incident on $N(O \cup O_\ell )$ is at most $2d \cdot \score(O)/\lambda_{\min}+|C_r|^{d}(d-1)$, \Cref{claim:neighborhood-size}. This is because vertices with degrees at most $d$ can contribute $2d \cdot \score(O)/\lambda_{\min}$. Since the input graph is \kdd-free, the remaining vertices can contribute at most $\binom{|\C_r|}{d}(d-1)$ to the total number of incident edges.
It follows that
\[(|C_r|-k)\alpha_i\leq 2d \cdot \score(O)/\lambda_{\min}+|C_r|^{d}(d-1)\]
Taking the summation for all $i \in [\ell]$, we get 

\begin{align*}
        (|C_r|-k)\sum_{i \in [\ell]}\alpha_i &\leq 2d \ell \cdot\score(O)/\lambda_{\min}+ |C_r|^{d}\ell(d-1)\\
        \sum_{i \in [\ell]}\alpha_i &\leq \frac{2d\ell \cdot \score(O)}{(|C_r|-k)\lambda_{\min}} +\frac{\ell|C_r|^{d}(d-1)}{(|C_r|-k)}\\
        &\leq \frac{2dk\cdot \score(O)}{(|C_r|-k)\lambda_{min}} +\frac{k|C_r|^{d}(d-1)}{(|C_r|-k)}
\end{align*}

Now, since $|C_r|= r=\frac{4dk}{\epsilon\lambda_{min}}+k$, we have $\frac{2dk}{(|C_r|-k)\lambda_{min}}\leq \frac{\epsilon}{2}$. \hide{Note that $|C_r|<r$ implies $\C<r$ and hence we can return the solution by doing brute-force search respecting the running time of our algorithm.} Since, by the definition of Case 2, $\score(O) > \frac{2kr^d(d-1)}{(r-k)\epsilon}$, we have $ \frac{k|C_r|^{d}(d-1)}{(|C_r|-k)} < \frac{\epsilon\cdot\score(O)}{2}$. Consequently, we have
\[\sum_{i \in [\ell]}\alpha_i < \frac{\epsilon \cdot\score(O)}{2}+\frac{\epsilon \cdot\score(O)}{2}=\epsilon\cdot\score(O)\]
Thus $\score(O_\ell) > (1-\epsilon)\score(O)$.
\end{proof}

Thus, we have shown that there exists $O_\ell \subseteq C_r$ that is a $(1-\epsilon)$-approximate solution which proves the correctness of our algorithm. Moreover, we get the following result. 
\begin{claim}\label{lemma:highestcontsoln}
    If $t > \frac{2kr^d(d-1)}{(r-k)\epsilon}$, where $r=\frac{4dk}{\epsilon\lambda_{min}}+k$, then the set of $\ceil{r}$ vertices in $\C$ with highest $\score(\cdot)$ contains a solution with $\score(\cdot)$ at least $(1-\epsilon)t$.
\end{claim}
The running time in this case is atmost $\binom{\ceil{r}}{k}n^{\Co{O}(1)}=(\nicefrac{dk}{\epsilon})^{\mathcal{O}(k)}n^{\Co{O}(1)}$.
Thus,\hide{in both the cases} we have the desired \fptas.
\end{proof}

\section{Lossy Kernel for \probonegen}

In this section, we give a kernel for our problem when the \igraph is \kddfree. Towards this, we first define the {\it optimization version} of \probonegen below.

\defparprob{\probonemax}{A bipartite graph $G = (\C, \V)$, an integer $k$, and a set $\Lambda$ of non-increasing vectors $\lambda^v=(\lambda_1^v, \lambda_2^v, \ldots, \lambda_{|A_v|}^v)$ for every $v \in \V$.}{$k$}{Find $S \subseteq \C$ such that $|S| \leq k$ and $\score _G^\Lambda(S)=\sum_{v \in \V}f_{G,v}(S)$ is maximized, where $f_{G,v}(S) = \sum_{j=1}^{|N_G(v)\cap S|}\lambda_j^v$.}
Here $\C$ represents the set of candidates and $\V$ represents the set of voters. For $v \in \V$, the set $N(v)$ represents the \approval $A_v$ of voter $v$. Let $\hat{S}\subseteq \C$ and $|\hat{S}| \leq k$ such that $\score_G^\Lambda(\hat{S})$ is maximum. Then, $\opt_{G,k,\Lambda}=\score_G^\Lambda(\hat{S})$. We also assume without loss of generality that all the voters have a nonzero \owa vector. Otherwise, if the vector corresponding to a voter is $(0,0,\ldots 0)$, we can safely delete the vertex corresponding to that voter. By ${\rm Iden}$, we denote an algorithm that outputs the input itself. We next describe some terminology.

\begin{definition}{\rm \cite{DBLP:conf/stoc/LokshtanovPRS17}}{\rm [$\alpha$-$\appa$]}\label{def:alphappa}
    Let $\alpha, \beta \in (0,1)$. An $\alpha$-{\it approximate polynomial-time pre-processing algorithm} ($\alpha$-\appa) for a parameterized optimization problem $\Pi$ is a pair of polynomial-time algorithms $\cA$ and $\cB$ called the reduction algorithm and solution lifting algorithm respectively such that the following holds:
    \begin{enumerate}
        \item given any instance $(I,k)$ of $\Pi$, $\cA$ outputs an instance $(I',k')$ of $\Pi$, and 
        \item  given any $\beta$-approximate solution of $(I',k')$, $\cB$ outputs an $\alpha\beta$-approximate solution of $(I,k)$.
    \end{enumerate}
\end{definition}
\begin{definition}{\rm \cite{DBLP:conf/stoc/LokshtanovPRS17}}{\rm [$\alpha$-approximate kernel]}
Let $\alpha \in (0,1)$. An $\alpha$-approximate kernel is an $\alpha$-\appa such that the output size $|I'| + k'$ is bounded by some computable function of $k$.
\end{definition}

\begin{definition}{\rm \cite{DBLP:journals/corr/abs-2403-06335}}{\rm [$(\alpha,\gamma)$-\appa]}\label{def:alphagammaappa}
    Let $\alpha, \gamma \in (0,1)$. An $(\alpha,\gamma)$-{\it approximate polynomial-time preprocessing algorithm}($(\alpha,\gamma)$-\appa) for a parameterized optimization problem $\Pi$ is a pair of polynomial-time algorithms $\cA$ and $\cB$ called the reduction algorithm and solution lifting algorithm respectively such that the following holds:
    \begin{enumerate}
        \item given any instance $(I, k)$ of $\Pi$, $\cA$ outputs an instance $(I', k')$ of $\Pi$, and 
        \item  given any $\beta$-approximate solution of $(I', k')$, $\cB$ outputs an $(\alpha\beta - \gamma)$-approximate solution of $(I,k)$.
    \end{enumerate}
\end{definition}

\begin{proposition}{\rm \cite{DBLP:journals/corr/abs-2403-06335}}\label{lemma:modifiedappa}
    For any $\epsilon_1, \epsilon_2, c \in (0,1)$, suppose that a maximization problem admits a polynomial-time $c$-approximation algorithm and a $(1 - \epsilon_1, \epsilon_2)$-\appa. Then, it admits a $(1 - \epsilon_1 - \epsilon_2/c)$-\appa with the same reduction algorithm.
\end{proposition}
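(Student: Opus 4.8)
The plan is to compose the two given objects in the obvious way and track how the approximation guarantee degrades. Let $\cA_0$ be the polynomial-time $c$-approximation algorithm, and let $(\cA_1,\cB_1)$ be the $(1-\epsilon_1,\epsilon_2)$-\appa. The reduction algorithm of the new \appa is simply $\cA_1$ itself (``the same reduction algorithm''), so I only need to construct a new solution-lifting algorithm $\cB$. Given an instance $(I,k)$, run $\cA_1$ to get $(I',k')$. Now suppose we are handed a $\beta$-approximate solution $s'$ of $(I',k')$. The new lifting algorithm $\cB$ does two things: (i) it runs the old lifting algorithm $\cB_1$ on $s'$ to obtain a solution $s_1$ of $(I,k)$, which by Definition~\ref{def:alphagammaappa} is an $((1-\epsilon_1)\beta - \epsilon_2)$-approximate solution of $(I,k)$; and (ii) independently, it runs the polynomial-time $c$-approximation algorithm $\cA_0$ directly on $(I,k)$ to obtain a solution $s_0$ that is $c$-approximate. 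Then $\cB$ outputs whichever of $s_0,s_1$ has larger objective value. Since both steps run in polynomial time, $\cB$ is a polynomial-time algorithm, as required.

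The remaining content is the arithmetic showing $\max\{$ratio of $s_0$, ratio of $s_1\} \ge 1-\epsilon_1-\epsilon_2/c$. Write $\opt = \opt(I,k)$. The solution $s_1$ achieves value at least $((1-\epsilon_1)\beta-\epsilon_2)\,\opt$ and $s_0$ achieves value at least $c\cdot\opt$. If $\beta \ge 1-\epsilon_2/c$ (equivalently $\epsilon_2/c \ge 1-\beta$), then using $(1-\epsilon_1)\beta - \epsilon_2 \ge \beta - \epsilon_1 - \epsilon_2 \ge (1-\epsilon_2/c) - \epsilon_1 - \epsilon_2 \cdot\tfrac{?}{}$ — more cleanly: $(1-\epsilon_1)\beta - \epsilon_2 = \beta - \epsilon_1\beta - \epsilon_2 \ge \beta - \epsilon_1 - \epsilon_2$, and when $\beta$ is close to $1$ this is close to $1-\epsilon_1-\epsilon_2$, which exceeds $1-\epsilon_1-\epsilon_2/c$ since $c<1$. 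In the complementary regime where $\beta$ is small, $c\cdot\opt$ alone is enough: if $\beta \le 1 - \epsilon_2/c$ then $c \ge c\beta \ge \ldots$; actually the cleanest split is $\beta$ versus $1$: we want $\max\{c,\ (1-\epsilon_1)\beta-\epsilon_2\} \ge 1-\epsilon_1-\epsilon_2/c$. Since the left side is at least the convex-type bound, one checks that if $(1-\epsilon_1)\beta-\epsilon_2 < 1-\epsilon_1-\epsilon_2/c$ then rearranging gives $(1-\epsilon_1)\beta < 1-\epsilon_1-\epsilon_2/c+\epsilon_2 = 1-\epsilon_1-\epsilon_2(1-c)/c$, so $\beta < 1 - \tfrac{\epsilon_2(1-c)}{c(1-\epsilon_1)}$, and in that range one shows $c \ge 1-\epsilon_1-\epsilon_2/c$ directly (it is independent of $\beta$, and holds because $c<1$ forces $\epsilon_2/c > \epsilon_2 \ge$ the needed slack; concretely $1-\epsilon_1-\epsilon_2/c \le 1 - \epsilon_2/c < c$ iff $\epsilon_2/c > 1-c$ iff $\epsilon_2 > c(1-c)$, which need not hold — so the honest argument keeps $\beta$ and checks the inequality $\max\{c, (1-\epsilon_1)\beta - \epsilon_2\} \ge (1-\epsilon_1-\epsilon_2/c)\beta$ instead, noting the output is a $\beta'$-approximate solution of $(I,k)$ for $\beta' = 1-\epsilon_1-\epsilon_2/c$ in the sense of parameterized optimization where $\beta$ is the input quality).

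The cleaner route, and the one I would actually write, is to prove the pointwise inequality $\max\{c,\ (1-\epsilon_1)\beta - \epsilon_2\} \ \ge\ (1-\epsilon_1 - \epsilon_2/c)\,\beta$ for all $\beta\in[0,1]$, which is exactly what is needed for $\cB$ to output a $\beta$-approximate-to-$(1-\epsilon_1-\epsilon_2/c)$-approximate solution. To see it: if $(1-\epsilon_1)\beta - \epsilon_2 \ge (1-\epsilon_1-\epsilon_2/c)\beta$ we are done with the second term; otherwise $(1-\epsilon_1)\beta - \epsilon_2 < (1-\epsilon_1)\beta - (\epsilon_2/c)\beta$, i.e.\ $\epsilon_2 > (\epsilon_2/c)\beta$, i.e.\ $\beta < c$, and then the first term satisfies $c > \beta \ge (1-\epsilon_1-\epsilon_2/c)\beta$ since $1-\epsilon_1-\epsilon_2/c \le 1$. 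This closes the case analysis. I expect the only genuine subtlety — the ``hard part,'' such as it is — to be matching the bookkeeping to the exact parameterized-optimization conventions of Definition~\ref{def:alphagammaappa} and the $\alpha$-\appa definition (what ``$\beta$-approximate'' means for a maximization problem, and that taking the better of two solutions is legitimate under those conventions), rather than anything in the elementary inequality itself. Everything else is a direct composition.
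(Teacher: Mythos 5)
The paper does not prove this proposition—it imports it verbatim from \cite{DBLP:journals/corr/abs-2403-06335}—so there is no in-paper proof to compare against; judged on its own, your argument is correct and is exactly the standard composition proof behind the cited result: keep $\cA_1$ as the reduction algorithm, let the new lifting algorithm return the better of the lifted solution and the output of the $c$-approximation run on the original instance, and use the pointwise bound $\max\{c,\ (1-\epsilon_1)\beta-\epsilon_2\}\geq(1-\epsilon_1-\epsilon_2/c)\,\beta$, proved by splitting on $\beta<c$ versus $\beta\geq c$. Your final paragraph's case analysis is the clean and complete version (the exploratory middle paragraph with its false starts is superseded by it and should simply be deleted), and the only bookkeeping worth stating explicitly is that both components run in polynomial time and that the guarantee is exactly what Definition~\ref{def:alphagammaappa} feeds into Definition~\ref{def:alphappa}.
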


Due to \cite{DBLP:journals/mp/NemhauserWF78}, we know that the greedy algorithm for maximizing submodular functions is a $(1-\frac{1}{e})$-approximate algorithm. The satisfaction function $f_{G,v}(\cdot)$ of each voter, $v$, is a non-decreasing submodular function \cite{DBLP:conf/aaai/SkowronF15}. Since the sum of submodular functions is also submodular, we have the following. 

\begin{lemma}\label{theorem:polyapprox}
 There is a polynomial-time $(1-\frac{1}{e})$-approximation algorithm for \probonemax.
\end{lemma}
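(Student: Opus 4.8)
The plan is to invoke the classical result of Nemhauser, Wolsey, and Fisher on greedy maximization of monotone submodular functions under a cardinality constraint, applied to the particular objective $\score_G^\Lambda(\cdot)$. First I would verify the hypotheses: by \Cref{lem:submodular-non-increasing OWA equivalence}, since each \owa vector $\lambda^v$ is non-increasing, the per-voter satisfaction function $f_{G,v}(S) = \sum_{j=1}^{|N_G(v)\cap S|}\lambda_j^v$ is monotone (non-decreasing) and submodular; this is exactly the ``if'' direction already established. The total objective $\score_G^\Lambda(S) = \sum_{v\in\V} f_{G,v}(S)$ is a sum of monotone submodular functions, hence itself monotone and submodular, and it satisfies $\score_G^\Lambda(\emptyset)=0$. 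Thus $\score_G^\Lambda$ is a normalized monotone submodular set function on ground set $\C$, and we seek to maximize it subject to $|S|\le k$.

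Next I would describe the greedy algorithm: starting from $S_0=\emptyset$, for $i=1,\dots,k$ pick $c_i \in \C\sm S_{i-1}$ maximizing the marginal gain $\score_G^\Lambda(S_{i-1}\cup\{c_i\}) - \score_G^\Lambda(S_{i-1})$, and set $S_i = S_{i-1}\cup\{c_i\}$; output $S_k$. Each marginal gain is computable in polynomial time because $\score_G^\Lambda(S)$ can be evaluated in time $O(|E|)$ by a single pass over the voters and their neighborhoods (using the \owa vectors), and there are at most $|\C|$ candidates to try in each of the $k \le |\C|$ rounds; so the whole procedure runs in time polynomial in $n+m$. By the theorem of \cite{DBLP:journals/mp/NemhauserWF78}, the greedy solution $S_k$ satisfies $\score_G^\Lambda(S_k) \ge (1-(1-\tfrac1k)^k)\,\opt_{G,k,\Lambda} \ge (1-\tfrac1e)\,\opt_{G,k,\Lambda}$, which is exactly the claimed $(1-\tfrac1e)$-approximation.

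There is essentially no serious obstacle here — the statement is a direct packaging of a textbook fact — so the only points needing care are bookkeeping: (i) confirming monotonicity and submodularity transfer through the finite sum and that normalization holds, both of which are immediate; and (ii) confirming polynomial-time evaluability of $\score_G^\Lambda$, which follows because a non-increasing \owa vector $\lambda^v$ has length $|A_v| \le m$ and the prefix sum $\sum_{j=1}^{|N(v)\cap S|}\lambda^v_j$ is computed in $O(|A_v|)$ time. I would close with a one-line remark that this lemma serves as the polynomial-time $c$-approximation with $c = 1-\tfrac1e$ required as the hypothesis of \Cref{lemma:modifiedappa}, which is how it will be used in the lossy kernel construction that follows.

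\begin{proof}
By \Cref{lem:submodular-non-increasing OWA equivalence}, for each voter $v\in\V$ the satisfaction function $f_{G,v}$ is monotone non-decreasing and submodular, since the corresponding \owa vector $\lambda^v$ is non-increasing. The objective $\score_G^\Lambda(S)=\sum_{v\in\V}f_{G,v}(S)$ is a finite sum of monotone submodular functions and is therefore itself monotone and submodular; moreover $\score_G^\Lambda(\emptyset)=0$ since $f(0)=0$ for every \owa vector. Hence $\score_G^\Lambda$ is a normalized monotone submodular function on the ground set $\C$, and \probonemax\ is the problem of maximizing it subject to the cardinality constraint $|S|\le k$.

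Run the standard greedy algorithm: set $S_0=\emptyset$, and for $i=1,\dots,k$ choose
\[
c_i \in \arg\max_{c\in\C\sm S_{i-1}}\bigl(\score_G^\Lambda(S_{i-1}\cup\{c\})-\score_G^\Lambda(S_{i-1})\bigr),
\]
and put $S_i=S_{i-1}\cup\{c_i\}$; finally output $S_k$. Each evaluation of $\score_G^\Lambda(\cdot)$ takes time polynomial in $n+m$: for a set $S$ one iterates over all voters $v$ and sums the $|N_G(v)\cap S|$-sized prefix of $\lambda^v$, which costs $O(|A_v|)$ per voter. Since there are at most $|\C|$ candidates to try in each of the at most $k\le|\C|$ rounds, the whole procedure runs in polynomial time.

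By the result of Nemhauser, Wolsey, and Fisher~\cite{DBLP:journals/mp/NemhauserWF78} on greedy maximization of a normalized monotone submodular function under a cardinality constraint,
\[
\score_G^\Lambda(S_k)\;\ge\;\Bigl(1-\bigl(1-\tfrac{1}{k}\bigr)^{k}\Bigr)\,\opt_{G,k,\Lambda}\;\ge\;\Bigl(1-\tfrac{1}{e}\Bigr)\,\opt_{G,k,\Lambda}.
\]
Thus the greedy algorithm is a polynomial-time $(1-\tfrac{1}{e})$-approximation algorithm for \probonemax.
\end{proof}
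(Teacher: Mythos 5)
Your proposal is correct and follows essentially the same route as the paper: the paper's proof also just observes that each voter's satisfaction function is monotone and submodular, that a sum of submodular functions is submodular, and then invokes the Nemhauser--Wolsey--Fisher greedy $(1-\tfrac{1}{e})$-approximation guarantee. Your write-up merely makes the greedy procedure, the normalization, and the polynomial-time evaluability explicit, which the paper leaves implicit.
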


We will split the kernel construction into two parts: first we will describe the analysis that allows us to reduce the number of candidates followed by the analysis that allows us to reduce the number of voters.

\paragraph{Reducing the number of candidates}

\begin{lemma}\label{lemma:reducingcandidate}
Suppose that $\cA$ is a parameter-preserving reduction algorithm for \probonemax that on input $\Co{I}= (G,k, \Lambda)$ just deletes a subset of candidates resulting in the instance $\Co{I'}=(G', k, \Lambda)$. If $\opt_{G', k, \Lambda} \geq (1 -\delta)\cdot\opt_{G, k, \Lambda}$ for some $\delta \in (0, 1)$, then $(\cA, \Iden)$ is a $(1-\delta)$-\appa.
\end{lemma}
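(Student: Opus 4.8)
The plan is to verify the two defining properties of an $(1-\delta)$-\appa (\Cref{def:alphagammaappa} with $\gamma = 0$, equivalently \Cref{def:alphappa} with $\alpha = 1-\delta$) directly from the hypotheses. The first property is immediate: by assumption, $\cA$ is a polynomial-time parameter-preserving reduction that takes $\Co{I} = (G,k,\Lambda)$ to $\Co{I'} = (G',k,\Lambda)$, so it outputs a valid instance of \probonemax with the same parameter $k$, and $\Iden$ is trivially polynomial-time. So the content is entirely in the solution-lifting property.

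For the solution-lifting step, first I would observe the key structural fact that since $\cA$ only deletes candidates, we have $\C' \subseteq \C$, and for every $S \subseteq \C'$ the neighborhoods in $G$ and $G'$ agree, so $\score_{G'}^\Lambda(S) = \score_G^\Lambda(S)$. In particular, any $\beta$-approximate solution $S$ of $\Co{I'}$ is a feasible solution of $\Co{I}$ (it has size at most $k$ and is a subset of $\C \supseteq \C'$), and the lifting algorithm $\Iden$ simply returns $S$ unchanged. It remains to bound its approximation ratio with respect to $\opt_{G,k,\Lambda}$. By the $\beta$-approximation guarantee in $\Co{I'}$ we have $\score_{G'}^\Lambda(S) \geq \beta \cdot \opt_{G',k,\Lambda}$, and by the hypothesis $\opt_{G',k,\Lambda} \geq (1-\delta)\cdot \opt_{G,k,\Lambda}$. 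Chaining these with the identity $\score_G^\Lambda(S) = \score_{G'}^\Lambda(S)$ gives
\[
\score_G^\Lambda(S) = \score_{G'}^\Lambda(S) \geq \beta\cdot \opt_{G',k,\Lambda} \geq \beta(1-\delta)\cdot \opt_{G,k,\Lambda} = (1-\delta)\beta \cdot \opt_{G,k,\Lambda},
\]
so $S$ is a $(1-\delta)\beta$-approximate solution of $\Co{I}$, exactly as required for $\alpha = 1-\delta$.

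I do not expect any real obstacle here; this lemma is essentially a bookkeeping statement that packages the "deletion only changes the optimum by a $(1-\delta)$ factor" condition into the language of approximate preprocessing. The one point worth being careful about is the direction of the optimum comparison: we need $\opt_{G',k,\Lambda} \le \opt_{G,k,\Lambda}$ as well (which follows for free since $\C' \subseteq \C$ means any committee of $\Co{I'}$ is a committee of $\Co{I}$ with the same score), so that the reduction is genuinely ``lossy in one direction'' and the lifted solution's ratio is measured against the possibly larger $\opt_{G,k,\Lambda}$; the hypothesis $\opt_{G',k,\Lambda} \geq (1-\delta)\opt_{G,k,\Lambda}$ is precisely what controls this loss. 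Once this is in place, \Cref{lemma:modifiedappa} can later be invoked to combine it with the polynomial-time $(1-\tfrac1e)$-approximation of \Cref{theorem:polyapprox}.
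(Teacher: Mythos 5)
Your proof is correct and follows essentially the same route as the paper's: lift the $\beta$-approximate solution via \Iden, use that deleting candidates does not change (or at least does not decrease) the score of a committee contained in $\C'$, and chain $\score_G(S)\geq\score_{G'}(S)\geq\beta\,\opt_{G',k,\Lambda}\geq\beta(1-\delta)\opt_{G,k,\Lambda}$. The additional remarks (feasibility of $S$ in $\Co{I}$, the direction $\opt_{G',k,\Lambda}\leq\opt_{G,k,\Lambda}$) are harmless elaborations of the same argument.
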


\begin{proof}
    Consider any $\beta$-approximate solution $Y$ to $\Co{I'}$. Since $G'$ results from deleting vertices from $\C$ (candidates), we have $\score_G(Y)\geq \score_{G'}(Y)\geq \beta \opt_{G', k, \Lambda} \geq \beta(1-\delta)\cdot\opt_{G, k, \Lambda}$. Thus, Definition~\ref{def:alphappa} implies that $(\cA, \Iden)$ is a $(1-\delta)$-\appa.
\end{proof}

\begin{lemma}\label{lemma:reducedcandidates} For any $\epsilon \in (0,1)$, there is a parameter-preserving $(1-\epsilon)$-\appa for \probonemax when the \igraph is \kddfree, such that the output has $(\frac{dk}{\epsilon})^{\mathcal{O}(d^2)}$ candidates.
\end{lemma}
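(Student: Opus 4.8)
The plan is to exhibit a polynomial-time reduction algorithm $\cA$ that \emph{only deletes candidates} and guarantees $\opt_{G',k,\Lambda}\ge (1-\epsilon)\,\opt_{G,k,\Lambda}$ for its output $\Co{I'}=(G',k,\Lambda)$; by \Cref{lemma:reducingcandidate} this already finishes the job, since then $(\cA,\Iden)$ is a parameter-preserving $(1-\epsilon)$-\appa. The reduction reuses the two regimes of the \fptas of \Cref{thm:fpt-apx}, but because \probonemax carries no threshold $t$, I would first run the polynomial-time $(1-\tfrac{1}{e})$-approximation of \Cref{theorem:polyapprox} to get a committee $Z$ and set $\mathrm{LB}=\score_G(Z)$, so that $\mathrm{LB}\le \opt_{G,k,\Lambda}\le \tfrac{e}{e-1}\mathrm{LB}$. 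Fix $r=\tfrac{4dk}{\epsilon\lambda_{\min}}+k$ and $\tau=\tfrac{2kr^d(d-1)}{(r-k)\epsilon}$ exactly as there, and branch on whether $\mathrm{LB}>\tau$.

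\emph{Large optimum ($\mathrm{LB}>\tau$).} Then $\opt_{G,k,\Lambda}>\tau$, and $\cA$ deletes every candidate outside the set $C_r$ of the $\lceil r\rceil$ candidates of largest $\score_G(\cdot)$ value. Applying \Cref{lemma:highestcontsoln} with $t=\opt_{G,k,\Lambda}$ produces a committee inside $C_r$ of score at least $(1-\epsilon)\opt_{G,k,\Lambda}$, and since deleting candidates outside $C_r$ does not alter the score of a committee contained in $C_r$, we get $\opt_{G',k,\Lambda}\ge (1-\epsilon)\opt_{G,k,\Lambda}$; here $|\Co{C'}|=\lceil r\rceil=(\tfrac{dk}{\epsilon})^{\cO(1)}$.

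\emph{Small optimum ($\mathrm{LB}\le\tau$).} Then $\opt_{G,k,\Lambda}\le \tau':=\tfrac{e}{e-1}\tau$. If some candidate $v$ has $\deg_G(v)\ge W':=\tau'/\lambda_{\min}$, then $\score_G(\{v\})\ge W'\lambda_{\min}=\tau'\ge \opt_{G,k,\Lambda}$, so $\{v\}$ is already optimal and $\cA$ deletes every other candidate. Otherwise every candidate has degree below $W'$, and $\cA$ applies \Cref{redrule:sunflower} exhaustively with $a=b=d$, $\ell=\lceil W'\rceil$ and sunflower size $w=W'k+1$. The swap argument in the proof of \Cref{lemma:optsame} is threshold-free: from \emph{any} committee $S$ it builds a committee $S'$ in the reduced graph with $\score_{G'}(S')\ge \score_G(S)$, so instantiating it at an optimal committee shows each application of the rule preserves the optimum, hence $\opt_{G',k,\Lambda}=\opt_{G,k,\Lambda}$ throughout (the reverse inequality is immediate as $G'$ is an induced subgraph). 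After exhaustive application no sunflower of size $W'k+1$ remains, so \Cref{lemma:kabfreesunflower} gives $|\Co{C'}|< d\big((W'k)\,W'\big)^d=d(W'^2k)^d$, which is $(\tfrac{dk}{\epsilon})^{\cO(d^2)}$ by the same computation as in \Cref{claim:cchoosek} (replacing $W$ by $\tfrac{e}{e-1}W$ only changes constants absorbed by the $\cO(\cdot)$).

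In both branches $\cA$ runs in polynomial time (the $(1-\tfrac{1}{e})$-approximation, sorting candidates by $\score_G(\cdot)$, and \Cref{redrule:sunflower}, which can fire at most $|\Co{C}|$ times, are all polynomial), it only deletes candidates, it keeps the optimum within a $(1-\epsilon)$ factor, and it outputs at most $(\tfrac{dk}{\epsilon})^{\cO(d^2)}$ candidates; \Cref{lemma:reducingcandidate} then yields that $(\cA,\Iden)$ is a parameter-preserving $(1-\epsilon)$-\appa with the claimed candidate bound. The one point I would check most carefully is exactly the threshold-freeness invoked above — that the swap argument of \Cref{lemma:optsame} and the chain of claims behind \Cref{lemma:highestcontsoln} really compare the $G'$-score of the committee they produce against the $G$-score of the committee they are fed, so that both may be run with an optimal committee as input; beyond that, the only new ingredient is the elementary bookkeeping for the constant-factor gap between $\mathrm{LB}$ and $\opt_{G,k,\Lambda}$.
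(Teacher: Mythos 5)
Your proposal is correct and follows essentially the same route as the paper's proof: run the polynomial-time $(1-\frac{1}{e})$-approximation to estimate the optimum, keep only the $\lceil r\rceil$ highest-score candidates and invoke \Cref{lemma:highestcontsoln} when the optimum is large, and otherwise bound candidate degrees and apply \Cref{redrule:sunflower} exhaustively (using the optimum-preserving swap argument of \Cref{lemma:optsame}), concluding in both branches via \Cref{lemma:reducingcandidate}. Your explicit remarks on the threshold-free reading of \Cref{lemma:optsame} and on a possible high-degree candidate in the small-optimum branch merely make explicit what the paper leaves implicit.
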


\begin{proof}
We apply the polynomial time $(1-\frac{1}{e})$-approximation algorithm from~\Cref{theorem:polyapprox}. Let $\optapx$ denote the \score of the returned solution and $r=\frac{4dk}{\epsilon\lambda_{min}}+k$. We have the following two cases:

\begin{description}[wide=0pt]
\item[Case 1:] $\optapx > \frac{2kr^d(d-1)}{(r-k)\epsilon}$. In this case, we delete all but the $r$ highest degree vertices in $\C$. We will show that $\opt_{G', k, \Lambda}\geq (1-\epsilon)\opt_{G, k, \Lambda}$, then, by~\Cref{lemma:reducingcandidate} it follows that we have a $(1-\epsilon)$-\appa.
We have
$\opt_{G, k, \Lambda}\geq \optapx > \frac{2kr^d(d-1)}{(r-k)\epsilon}$. Next, using~\Cref{lemma:highestcontsoln}, we have $\opt_{G',k, \Lambda} > (1-\epsilon)\opt_{G,k,\Lambda}$ and we are done.

\item[Case 2:] $\optapx \leq \frac{2kr^d(d-1)}{(r-k)\epsilon}$. In this case, we have $\opt_{G,k,\Lambda}\leq \frac{e}{e-1}\optapx \leq \frac{e}{e-1} \frac{2kr^d(d-1)}{(r-k)\epsilon}$. Let $\psi=\ceil{\frac{e}{e-1} \frac{2kr^d(d-1)}{(r-k)\epsilon}}$. We have for each $v \in \C$, $deg(v)< \nicefrac{\psi}{\lambda_{\min}}$; otherwise, $v$ itself is a solution. We apply \Cref{redrule:sunflower} exhaustively with $Wk+1 = (\nicefrac{\psi}{\lambda_{\min}})k+1$. Let the final graph be $G'=(\C',\V')$ where $\V'$ is obtained by deleting all the isolated vertices. By~\Cref{lemma:kabfreesunflower}, we know that there are at most $d((\nicefrac{\psi}{\lambda_{\min}})^2k)^d=(\frac{dk}{\epsilon})^{\mathcal{O}(d^2)}$ vertices in $\C'$. Now, we need to show that this is a $(1-\epsilon)$-\appa. By ~\Cref{lemma:optsame}, we know that the application of~\Cref{redrule:sunflower} does not change the optimum $\score(\cdot)$ value. Thus, we have $\opt_{G',k, \Lambda}\geq \opt_{G,k, \Lambda}$, which together with~\Cref{lemma:reducingcandidate}, implies that the reduction algorithm in \textbf{Case 2} is also a $(1-\epsilon)$-\appa.
\end{description} 
This completes the proof of the lemma.
\end{proof}

\paragraph{Reducing the number of voters}
\begin{lemma}\label{lemma:reducingvoters}
    Suppose that $\Co{A}$ is a parameter-preserving reduction algorithm for \probonemax that also preserves the set of candidates and \owa vectors, i.e, on input $\Co{I}=(G = (\C, \V), k, \Lambda)$, it produces $\Co{I'}=(G' = (\C, \V'), k, \Lambda)$. If there exists $\delta, h \geq 0$ and $s > 0$ (where $h$ and $s$ can depend on \Co{I}) s.t. the following holds for any $k$-sized subset $X\sse \C$: 
 \begin{equation}\label{eqn:lemreducevoter}
         |\score_G(X)-s\cdot \score_{G'}(X)-h|\leq \delta \cdot \opt_{\Co{I}}, 
    \end{equation}
then, for every $\delta_1 \in (0,1)$, $ (\cA, \Iden)$ is a $(1-\delta_1, 2\delta)$-\appa. 
\end{lemma}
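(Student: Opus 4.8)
The plan is to unfold Definition~\ref{def:alphagammaappa} with the trivial solution‑lifting algorithm $\Iden$ and to verify the required guarantee by a short chain of inequalities. Fix $\delta_1\in(0,1)$ and let $Y$ be any $\beta$-approximate solution of $\Co{I'}$; since $\cB=\Iden$ returns $Y$ unchanged, it suffices to prove
\[
\score_G(Y)\;\geq\;\bigl((1-\delta_1)\beta-2\delta\bigr)\cdot\opt_{\Co{I}}.
\]
We may assume $\opt_{\Co{I}}>0$, as otherwise there is nothing to prove. Let $O$ be an optimal committee of $\Co{I}$, which by monotonicity of the objective we may take of size exactly $k$, so that $\score_G(O)=\opt_{\Co{I}}$. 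The key structural observation is that $\cA$ leaves the candidate set $\C$ and the family $\Lambda$ untouched, so $O$ is itself a committee of $\Co{I'}$; hence $\opt_{\Co{I'}}\geq\score_{G'}(O)$, and since $Y$ is $\beta$-approximate for $\Co{I'}$,
\[
\score_{G'}(Y)\;\geq\;\beta\,\opt_{\Co{I'}}\;\geq\;\beta\,\score_{G'}(O).
\]

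Next I would invoke the hypothesis \Cref{eqn:lemreducevoter} twice, each time retaining only the useful side of the absolute value: applied to $X=Y$ it gives $\score_G(Y)\geq s\,\score_{G'}(Y)+h-\delta\,\opt_{\Co{I}}$, and applied to $X=O$ it rearranges to $s\,\score_{G'}(O)+h\geq\score_G(O)-\delta\,\opt_{\Co{I}}=(1-\delta)\,\opt_{\Co{I}}$. Since $s>0$ we may multiply $\score_{G'}(Y)\geq\beta\,\score_{G'}(O)$ by $s$, and since $h\geq0$ and $\beta\le1$ we have $s\beta\,\score_{G'}(O)+h\geq\beta\bigl(s\,\score_{G'}(O)+h\bigr)$; stringing everything together,
\[
\score_G(Y)\;\geq\;s\,\score_{G'}(Y)+h-\delta\,\opt_{\Co{I}}\;\geq\;\beta\bigl(s\,\score_{G'}(O)+h\bigr)-\delta\,\opt_{\Co{I}}\;\geq\;\beta(1-\delta)\,\opt_{\Co{I}}-\delta\,\opt_{\Co{I}}.
\]
As $\beta\le1$ gives $\beta(1-\delta)\geq\beta-\delta$, we obtain $\score_G(Y)\geq(\beta-2\delta)\,\opt_{\Co{I}}\geq\bigl((1-\delta_1)\beta-2\delta\bigr)\,\opt_{\Co{I}}$, which is precisely the bound Definition~\ref{def:alphagammaappa} asks of a $(1-\delta_1,2\delta)$-\appa whose solution‑lifting algorithm is $\Iden$.

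I do not anticipate a genuine obstacle; the whole argument is bookkeeping, and the closest thing to a pitfall is keeping the orientations straight. The three points that need attention are: (i) that $Y$ and $O$ are legitimate choices of $X$ in \Cref{eqn:lemreducevoter} — for $O$ this is why we take an optimum of size exactly $k$, using monotonicity; (ii) that one must use the lower side of the absolute‑value bound at $Y$ but the upper side at $O$; and (iii) the sign hypotheses $s>0$, $h\geq0$ together with $\beta\le1$ (valid for any approximate solution of a maximization problem), which are exactly what keep the inequalities pointing the right way. It is worth noting that the argument actually produces a ``$(1,2\delta)$-\appa'': the slack parameter $\delta_1$ enters only because Definition~\ref{def:alphagammaappa} requires the multiplicative factor to lie strictly in $(0,1)$.
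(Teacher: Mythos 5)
Your proof is correct and follows essentially the same route as the paper's: apply \Cref{eqn:lemreducevoter} in the lower direction at $Y$ and in the upper direction at the optimum of $\Co{I}$ (which remains feasible in $\Co{I'}$ since the candidate set is preserved), combine with the $\beta$-approximation guarantee and $h\geq 0$, $\beta\leq 1$ to get $\score_G(Y)\geq(\beta-2\delta)\opt_{\Co{I}}$, and finish with the trivial relaxation to $((1-\delta_1)\beta-2\delta)\opt_{\Co{I}}$. Your explicit remarks about taking the optimum to have size exactly $k$ and the argument really giving a $(1,2\delta)$-guarantee are fair observations that the paper leaves implicit, but they do not change the argument.
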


\begin{proof} 
Let $Y^*$ denote an optimum solution for \Co{I}. Let $Y$ denote a $\beta$-approximate solution for \Co{I'}. We note that $\opt_{G,k,\Lambda}= \score_{G}(Y^*)$. We observe that \Cref{eqn:lemreducevoter} implies $\score_G(Y) \geq s\cdot\score_{G'}(Y)+h -\delta\cdot \score_{G}(Y^*)$. We argue as follows. 
\begin{align*}
            \score_G(Y) &\geq s \cdot \score_{G'}(Y) +h -\delta\cdot \score_{G}(Y^*)\\ 
            &\geq s\beta \cdot \opt_{\Co{I'}} +h -\delta \cdot \score_{G}(Y^*)\\
            &[\text{since $Y$ is a $\beta$-approximate solution for \Co{I'}.}]\\
            &\geq s\beta \cdot \score_{G'}(Y^*) +h -\delta \cdot\score_{G}(Y^*)\\
                & \geq s\beta\cdot \big(\frac{1}{s}(\score_G(Y^*)-h-\delta\cdot\opt_{\Co{I}})\big)+h -\delta \cdot \score_{G}(Y^*)\\
            &\geq \beta(\score_G(Y^*)-h-\delta\cdot\opt_{\Co{I}})+h-\delta \cdot \opt_{\Co{I}}\\
            &\geq \beta \opt_{\Co{I}}-\beta h-\beta\delta\opt_{\Co{I}}+h-\delta \cdot \opt_{\Co{I}}\\
            &\geq (\beta -2\delta)\cdot \opt_{G,k,\Lambda}
    \end{align*}
Now $(\beta -2\delta)\cdot \opt_{G,k,\Lambda}\geq ((1-\delta_1)\beta-2\delta)\cdot \opt_{G,k,\Lambda} $ for any $\delta_1 \in (0,1)$. Thus, due to Definition~\ref{def:alphagammaappa} we have that for any $\delta_1 \in (0,1)$, $(\cA, \Iden)$ is a $(1-\delta_1, 2\delta)$-\appa.

\end{proof}

\begin{lemma}\label{lemma:votersreduced}
     For any $\epsilon \in (0,1)$, there is a parameter-preserving $(1-\epsilon)$-\appa for \probonemax when the \igraph is \kddfree, such that the output graph has the same set of candidates and $\mathcal{O}(k\cdot d \cdot n^{d+1}/\epsilon)$ voters.
\end{lemma}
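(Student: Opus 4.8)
The plan is to reduce the number of voters by bucketing voters according to their ``behavior'' with respect to $k$-sized committees, and then keeping only a bounded number of representatives from each bucket. After the candidate-reduction step of \Cref{lemma:reducedcandidates}, we may assume $|\C| = (\frac{dk}{\epsilon})^{\mathcal{O}(d^2)}$, in particular $|\C| = n^{\mathcal{O}(1)}$ for the purposes of this lemma (the $d$-dependence is folded into the $n$ appearing in the target bound $\mathcal{O}(kdn^{d+1}/\epsilon)$; alternatively one can run this step directly on the original graph). The key observation is that since $G$ is \kddfree, every voter $v$ has $|N(v)| \le \binom{|\C|}{d-1}(d-1) + \ldots$, and more to the point, what a voter $v$ contributes to $\score(X)$ for a committee $X$ depends only on the multiset of values $\{\,\lambda^v_1,\ldots,\lambda^v_j\,\}$ together with which candidates of $\C$ lie in $N(v)$ — but for the purpose of a counting argument it suffices to classify voters by their \emph{type}, namely the pair $(N(v)\cap \C_{\text{small}}, \lambda^v)$ restricted appropriately, where two voters of the same type contribute identically to every committee. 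Since $|\C|=n^{\mathcal O(1)}$ and each voter of degree $<d$ is determined by a $(d-1)$-subset of $\C$ together with an \owa vector of length $<d$, while high-degree voters can be handled separately (there are few relevant candidate-subsets of size exactly $d$ by \kddfree-ness, or such a voter alone is nearly a solution), the number of distinct ``small'' types is $\mathcal{O}(n^{d-1})$-ish, and we keep at most $k+1$ (or $\mathcal O(k)$) voters of each type.

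The concrete reduction rule $\cA$ is: group the voters into classes so that voters in the same class are interchangeable for scoring purposes (same approval set among the at-most-$d-1$-degree case, or otherwise same relevant intersection pattern and same \owa prefix), and within each class delete all but $\lceil k \cdot (\text{something})/\epsilon\rceil$ voters, keeping those with the largest $\lambda^v_1$ (equivalently, arranged so that the retained ones dominate). Then apply \Cref{lemma:reducingvoters}: we must exhibit $s>0$, $h\ge 0$, $\delta\in(0,1)$ such that for every $k$-sized $X\subseteq\C$,
\[
\bigl|\score_G(X) - s\cdot\score_{G'}(X) - h\bigr| \le \delta\cdot\opt_{\Co I}.
\]
Here I would take $s = 1$ and $h = \sum_{v \text{ deleted}} (\text{base contribution of } v)$ where the ``base contribution'' of a deleted voter $v$ is something independent of $X$ — namely $0$ if we are careful, or the full $\score_G(\{v\}\text{-free part})$; the point is that a deleted voter $v$, since it shares its class with many retained voters, contributes to $\score_G(X)$ an amount that is within a small additive slack (bounded using $\lambda^v_1\le 1$ and the fact that $|N(v)\cap X|\le k$, so $v$'s contribution is at most $k$) of what one of the retained twins contributes — and there are few enough deleted voters per class relative to $\opt$. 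Formally, since each class retains $\Omega(k/\epsilon)$ voters and $\opt \ge \score_G(\{c\})$ for the best single candidate, a packing/averaging argument bounds the total discrepancy by $\delta\cdot\opt$ with $\delta = \epsilon/2$ (say). Then \Cref{lemma:reducingvoters} gives a $(1-\delta_1, 2\delta)$-\appa, i.e.\ a $(1-\delta_1,\epsilon)$-\appa; combined with the polynomial-time $(1-\frac1e)$-approximation of \Cref{theorem:polyapprox} via \Cref{lemma:modifiedappa}, this upgrades to a $(1 - \delta_1 - \epsilon/(1-\frac1e))$-\appa, and rescaling $\epsilon$ at the outset yields a genuine $(1-\epsilon)$-\appa. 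The voter count after the reduction is (number of classes) $\times\, \mathcal O(k/\epsilon) = \mathcal O(k\cdot d\cdot n^{d+1}/\epsilon)$, matching the statement.

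The main obstacle I anticipate is pinning down the right equivalence relation on voters so that (i) the number of classes is genuinely $\mathcal{O}(d\cdot n^{d})$ (this is where \kddfree-ness must be used — without it a voter of large degree could have $n^{\Theta(d)}$ possible neighborhoods restricted to a committee, blowing up the class count), and (ii) within a class, the per-committee scoring discrepancy between a deleted voter and its retained representatives is bounded by a quantity that sums to $\le\delta\cdot\opt$ over all deleted voters. Subtlety (i): high-degree voters ($\deg(v)\ge$ some $\mathrm{poly}(dk/\epsilon)$) are dispatched immediately since $\{v\}$ is already a $(1-\epsilon)$-good solution, so only low-degree voters remain, and for those $N(v)$ itself (not merely $N(v)\cap X$) ranges over $\mathcal O(n^{d-1})$ sets; then two low-degree voters with the same $N(v)$ and the same \owa vector are perfectly interchangeable, so $h$ can in fact be taken with $\delta$ as small as we like by retaining enough twins — giving us full freedom to hit any target $\epsilon$. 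I would double-check the exact exponent: the bound $\mathcal O(kdn^{d+1}/\epsilon)$ suggests classes counted as $\mathcal{O}(dn^{d})$ (not $n^{d-1}$), which is consistent with also tracking the length of the \owa vector (at most $d$, contributing the factor $d$) and with a slightly looser bound on the number of distinct low-degree neighborhoods; the final bookkeeping of these polynomial factors is the routine-but-fiddly part.
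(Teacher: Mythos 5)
There is a genuine gap at the heart of your reduction rule. You keep only $\mathcal{O}(k/\epsilon)$ representative voters per equivalence class and try to invoke \Cref{lemma:reducingvoters} with $s=1$ and the deleted voters' contribution absorbed into the constant $h$. This cannot work: a single class may contain an enormous number of identical voters (e.g.\ a constant fraction of $\V$ all approving the same set), and the total contribution of the deleted voters to $\score_G(X)$ is not a constant --- it depends on whether $X$ hits that class's approval set, and it can swing between $0$ and $\Omega(\opt_{\Co{I}})$. So inequality~(\ref{eqn:lemreducevoter}) fails for any $X$-independent $h$ and any small $\delta$; equivalently, a $\beta$-approximate committee in the truncated instance may completely ignore a heavy class and lift (via \Iden) to a committee of negligible true score. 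Your remark that ``there are few enough deleted voters per class relative to $\opt$'' is exactly what is false; retaining ``enough twins'' does not help because the output must again be an unweighted instance, so the retained representatives cannot stand in for an unbounded multiplicity. This is precisely why \Cref{lemma:reducingvoters} carries the scaling parameter $s$: the paper groups voters by identical neighborhoods, lets $m_v$ be the multiplicity of each distinct neighborhood, sets $s=\frac{\epsilon^*\optapx}{10kdn^d}$ (with $\optapx$ from the greedy $(1-\frac{1}{e})$-approximation), and keeps $\floor{m_v/s}$ copies of each class, i.e.\ it rescales multiplicities proportionally rather than truncating them. Then for every $k$-sized $X$ the per-class rounding error is at most $ks$ (each voter contributes at most $k$ since $\lambda_1^v\le 1$), and since \kddfree{}-ness bounds the number of distinct neighborhoods by $\mathcal{O}(dn^d)$, the total discrepancy $|\score_G(X)-s\cdot\score_{G'}(X)|$ is at most $\frac{\epsilon^*}{2}\opt_{\Co{I}}$, which is the form \Cref{lemma:reducingvoters} needs; the voter bound $\mathcal{O}(kdn^{d+1}/\epsilon)$ comes from $|\V'|\le |\V|/s$ and $|\V|\le n\opt_{\Co{I}}/\lambda_{\min}$.

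Two smaller points: the outer scaffolding of your argument (get a $(1-\delta_2,\epsilon^*)$-\appa, then combine with \Cref{theorem:polyapprox} via \Cref{lemma:modifiedappa} and rescale $\epsilon$) matches the paper and is fine; and your class count $\mathcal{O}(dn^d)$ is in the right ballpark, though the clean way to get it is the paper's: by \kddfree{}-ness at most $dn^d$ voters have degree at least $d$, and voters of smaller degree have at most $n^d$ distinct neighborhoods --- no separate ``high-degree voters are nearly a solution'' case is needed (and note that a \emph{voter} of high degree is not a solution; only candidates are).
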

\begin{proof}
We want to prove the result for any $\epsilon \in (0,1)$. To proceed, without loss of generality, we fix an arbitrary $\epsilon \in (0,1)$. Let $0<\Tilde{\epsilon}<\epsilon$ and $\epsilon^*=(1-\frac{1}{e})\Tilde{\epsilon}$. 

Suppose that we have a $(1-\delta_2, \epsilon^*)$-\appa for any $\delta_2\in (0,1)$. Then due to~\Cref{lemma:modifiedappa} and~\Cref{theorem:polyapprox} we have $(1-\delta_2-\frac{\epsilon^*}{1-\frac{1}{e}})$-\appa which is a $(1-\delta_2-\Tilde{\epsilon})$-\appa. Since $\delta_2$ can take any value in $(0,1)$ we set $\delta_2= \epsilon -\Tilde{\epsilon}$ to get $(1-\epsilon)$-\appa

Thus, to prove this lemma, it is sufficient to show that for any $\delta_2\in (0,1)$, we have a $(1-\delta_2,\epsilon^*)$-\appa. On input $\Co{I}=(G,k, \Lambda)$, the reduction algorithm works as follows.
\begin{enumerate}
    \item  Use~\Cref{theorem:polyapprox} to compute $\optapx$ such that $\opt_{\Co{I}}\geq \optapx\geq(1-\frac{1}{e})\opt_{\Co{I}}$. Let $s=\frac{\epsilon^*\optapx}{k\cdot10dn^d}$.
    \item Let $\V_{set}$ denote the subset of vertices in $\V$ with distinct neighborhoods, i.e., the set of voters with distinct \approval. 
    \item We start with $\V'$ being an empty multiset. For each $v \in \V_{set}$, let $m_v$ denote the number of occurrences of $v$ in $\V$. We add $\floor{\nicefrac{m_v}{s}}$ copies of $v$ to $\V'$. We define graph $G'=(\C,\V')$.
    \item We output $\Co{I'}=(G',k, \Lambda)$. 
\end{enumerate}
Since the degree of every vertex in $\C$ is at most $\nicefrac{\opt_{\Co{I}}}{\lambda_{\min}}$, we have $|\V|\leq n\opt_{\Co{I}}/\lambda_{\min}$. Thus, by definition of $\V'$, 

\[|\V'|\leq \frac{|\V|}{s}\leq \hide{\frac{ n\opt_{\Co{I}}}{\lambda_{\min}s}=} \left(\frac{10\cdot\opt_{\Co{I}}}{\optapx\lambda_{\min}}\right)\frac{kdn^{d+1}}{\epsilon^*}=\mathcal{O}\left(\frac{kdn^{d+1}}{\epsilon}\right)\]

We claim that for every $k$-sized subset $Y\subseteq \C$, we have $|\score_G(Y)-s\cdot \score_{G'}(Y)|\leq \frac{\epsilon^*}{2}\opt_{\Co{I}}$. This with ~\Cref{lemma:reducingvoters} yields that $(\cA,\Iden)$ is a $(1-\delta_2,\epsilon^*)$-\appa for any $\delta_2\in(0,1)$ as desired.

To see that the claim holds, we observe that 
\begin{align*}
    &|\score_G(Y) - s \cdot \score_{G'}(Y)| \leq \sum_{v \in \V_{set}}k\left|m_v -s \cdot \Ceil{\frac{m_v}{s}}\right|\lambda_1^v\\
    &\leq ks\cdot |\V_{set}|, \text{ since $\lambda_1^v\leq1$, for each $v\in \V$.}
\end{align*}
Since $G$ is \kddfree, therefore for every $d$-sized subset in $\C$ there can be at most $d$ common neighbors in $\V$. Thus the number of vertices in $\V$ with degree at least $d$ is at most $dn^d$. The number of vertices with unique neighborhood and with degree at most $d$ is $n^d$. Thus, we have
\begin{align*}
    &|\score_G(Y)-\!s\cdot \score_{G'}(Y)|\leq ks \cdot |\V_{set}|\leq ks(dn^d+n^d)\\
    &= k\left(\frac{\epsilon^*\optapx}{10kdn^d}\right)(d+1)n^d\leq \!\frac{\epsilon^*(d+1)\opt_{\Co{I}}}{10d}\leq\! \frac{\epsilon^*}{2}\opt_{\Co{I}} 
\end{align*}
\end{proof}

\paragraph{Towards the kernel.} On input $(G, k, \Lambda)$, the reduction algorithm works as follows. 
\begin{enumerate}
    \item Apply $(1-\frac{\epsilon}{2})$-\appa reduction from~\Cref{lemma:reducedcandidates} to reduce the number of candidates.
        \item Apply $(1-\frac{\epsilon}{2})$-\appa reduction from~\Cref{lemma:votersreduced} to reduce the number of voters.
\end{enumerate}

The two steps ensure we get $(1-\frac{\epsilon}{2})^2$-\appa. Since $(1-\frac{\epsilon}{2})^2 \geq (1-\epsilon)$, we have $(1-\epsilon)$-\appa. The first step reduces the number of candidates to $(\frac{dk}{\epsilon})^{\mathcal{O}(d^2)}$. In the second step, the number of voters reduces to $(\frac{dk}{\epsilon})^{\mathcal{O}(d^3)}$. Consequently, we obtain the following result.

\begin{theorem}\label{thm:lossy-kernel}
    For any $d \in \mathbb{N}$ and $\epsilon \in (0,1)$, there is a parameter preserving $(1-\epsilon)$-approximate kernel for \probonemax when the \igraph is \kddfree with $(\frac{dk}{\epsilon})^{\mathcal{O}(d^2)}$ candidates and  $(\frac{dk}{\epsilon})^{\mathcal{O}(d^3)}$ voters.
\end{theorem}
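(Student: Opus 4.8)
The plan is to obtain \Cref{thm:lossy-kernel} simply by composing the two parameter-preserving approximate preprocessing algorithms already established, namely the candidate-reduction \appa of \Cref{lemma:reducedcandidates} and the voter-reduction \appa of \Cref{lemma:votersreduced}, and then verifying that the size bounds propagate correctly through the composition. First I would recall the standard fact (from Lokshtanov et al.~\cite{DBLP:conf/stoc/LokshtanovPRS17}) that the composition of an $\alpha_1$-\appa with an $\alpha_2$-\appa is an $\alpha_1\alpha_2$-\appa: one runs the reduction algorithms in sequence, and lifts a $\beta$-approximate solution back through the second solution-lifting algorithm and then the first; here both lifting algorithms are $\Iden$, so the composed lifting algorithm is also $\Iden$, which keeps the reduction parameter-preserving. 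Instantiating both lemmas with $\epsilon' = \epsilon/2$ in place of $\epsilon$, the composition is a $(1-\tfrac{\epsilon}{2})^2$-\appa, and since $(1-\tfrac{\epsilon}{2})^2 = 1 - \epsilon + \tfrac{\epsilon^2}{4} \geq 1-\epsilon$, this is in particular a $(1-\epsilon)$-\appa (using that an $\alpha$-\appa is trivially an $\alpha'$-\appa for any $\alpha' \le \alpha$, because a $\beta$-approximate solution with $\beta \le 1$ gives $\alpha\beta \ge \alpha'\beta$).

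Next I would track the sizes. After Step~1 (\Cref{lemma:reducedcandidates} with parameter $\epsilon/2$), the candidate set has size $\big(\tfrac{dk}{\epsilon}\big)^{\mathcal{O}(d^2)}$; call this new value $n' = |\C'|$. Crucially, \Cref{lemma:votersreduced} is stated with a voter bound of the form $\mathcal{O}(k\cdot d\cdot n^{d+1}/\epsilon)$ where $n$ is the number of candidates of \emph{its} input instance. Since Step~2 is applied to the already candidate-reduced instance, its ``$n$'' is $n' = \big(\tfrac{dk}{\epsilon}\big)^{\mathcal{O}(d^2)}$, so the resulting number of voters is
\[
\mathcal{O}\!\left(\frac{k\, d\, (n')^{d+1}}{\epsilon}\right) \;=\; \mathcal{O}\!\left(\frac{k\,d}{\epsilon}\cdot\Big(\tfrac{dk}{\epsilon}\Big)^{\mathcal{O}(d^2)\cdot(d+1)}\right) \;=\; \Big(\tfrac{dk}{\epsilon}\Big)^{\mathcal{O}(d^3)}.
\]
Also note that \Cref{lemma:votersreduced} preserves the candidate set, so the candidate count after Step~2 is still $\big(\tfrac{dk}{\epsilon}\big)^{\mathcal{O}(d^2)}$. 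Both reductions run in polynomial time, so the composition does too; and both preserve the parameter $k$, so the output instance has parameter $k$ as claimed. Finally, since $|I'| + k'$ is bounded by a computable (in fact, explicit) function of $d$, $k$, and $\epsilon$ — and $d$ is part of the input restriction — the output of the composed reduction satisfies the size requirement in the definition of an $\alpha$-approximate kernel, so we indeed obtain a $(1-\epsilon)$-approximate kernel.

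The main thing to be careful about — rather than a deep obstacle — is the bookkeeping in the composition: one must check that \Cref{lemma:votersreduced} really does apply unchanged to the candidate-reduced instance (it does, since that instance is still a \kddfree \igraph, as deleting candidates cannot create a $K_{d,d}$), and that feeding in $\epsilon/2$ rather than $\epsilon$ into both lemmas does not blow up the asymptotic size bounds (it does not, since $1/\epsilon$ and $2/\epsilon$ differ only by a constant factor inside the $\mathcal{O}(\cdot)$ in the exponent). A secondary point worth a sentence is to confirm that the quality guarantees genuinely multiply: an instance that is $(1-\tfrac{\epsilon}{2})$-reduced, then $(1-\tfrac{\epsilon}{2})$-reduced again, lifts a $\beta$-approximate solution of the final instance to a $(1-\tfrac{\epsilon}{2})\beta$-approximate solution of the intermediate instance (via $\Iden$), and then to a $(1-\tfrac{\epsilon}{2})^2\beta$-approximate solution of the original (again via $\Iden$), which is what the definition of $(1-\epsilon)$-\appa requires. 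With these checks in place the theorem follows.
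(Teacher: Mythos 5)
Your proposal is correct and follows essentially the same route as the paper: compose the candidate-reduction \appa of \Cref{lemma:reducedcandidates} and the voter-reduction \appa of \Cref{lemma:votersreduced}, each instantiated with $\epsilon/2$, use $(1-\tfrac{\epsilon}{2})^2 \geq 1-\epsilon$, and plug the reduced candidate count into the $\mathcal{O}(k\,d\,n^{d+1}/\epsilon)$ voter bound to get $(\tfrac{dk}{\epsilon})^{\mathcal{O}(d^3)}$ voters. Your extra checks (that candidate deletion preserves \kddfree{}ness, that the composed lifting algorithm is $\Iden$, and that the guarantees multiply) are exactly the bookkeeping the paper leaves implicit.
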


\section{Additive Parameterized Approximation}
In this section, we design a one-additive parameterized approximation algorithm. 
In particular, we achieve the following: given a \yes-instance  ${\cal I}=(G,k,t,\Lambda)$ of \probone, where $G$ is a $K_{d,d}$-free graph, we output a committee of size $k+1$ whose score is at least $t$, in time \fpt in $k+d+\epsilon$. Note that, we may return a $(k+1)$-sized committee even for a \no-instance. But, if the algorithm returns \no, then ${\cal I}$ is a \no-instance of \probone. 

 We first give an intuitive description of the algorithm.  
 If the candidate set $\C$ is bounded by $g(k,d)$, then we can try all possible subsets to obtain a solution to ${\cal I}$. 
 If $t$ is bounded by $f(k,d)$, then observe that the degree of every vertex in $\C$ is bounded by $\nicefrac{f(k,d)}{\lambda_{\min}}$, where $\lambda_{\min} \leq 1$ is a constant; otherwise, a vertex of the highest degree is a solution to ${\cal I}$. So, we apply ~\Cref{redrule:sunflower} with appropriately chosen $W$ and bound the size of $\C$ by another function of $k+d$, and now again we can try all possible subsets of $\C$. When none of the above cases hold, we either correctly return \no or for a \yes-instance, we  find a committee of size $k$, say $S'$, using \apxvot (\Cref{alg:combgenapxthn}) for $\epsilon = \nicefrac{\lambda_{\min}}{4k}$ (the choice of $\epsilon$ will be clear later). Recall that \apxvot returns a $(1-\epsilon)$-approximate solution, thus, $\score_G^\Lambda(S')\geq (1-\epsilon)t$. We construct a large enough set of candidates (bounded by a function of $(k,d)$) of high score, say $H$, and argue that, given a \yes-instance, either every solution to ${\cal I}$ contains a vertex from $H$, or there is a vertex $x$ in $H$ such that $\score_G^\Lambda(S'\cup \{x\})\geq t$. 

\Cref{alg:genapxad} describes the procedure formally. For $x\in \C$, let us recall the definition of $\Lambda_x$. 
Firstly, for any \owa vector $\lambda^v =\{ \lambda^v_1,\lambda^v_2,\ldots, \lambda^v_{|A_v|}\}$, where $v\in \V$, let  $\lambda^v_{-1}$  denote the \owa vector starting from the second entry, i.e., $\{ \lambda^v_2,\lambda^v_3,\ldots, \lambda^v_{|A_v|}\}$. Then $\Lambda_x$ is the set $\cup_{v\in N(x)}\{ \lambda^v_{-1}\}\cup_{v \in B\backslash N(x)}\{ \lambda^v$\}, i.e., we delete the first entry of the \owa vectors of neighbors of $x$, and rest remains the same. Also, let $\V_0$ and  $\V_\emptyset$ denote the voters with all-zero vectors and empty vectors, respectively. We assume that our input instance does not contain any such voters.

We prove the correctness of \Cref{alg:genapxad} in the following lemma.
\begin{lemma} \label{lem:additive-apx-correctness}
  Given a \yes-instance $(G,\Lambda,k,t)$ of \probone,    \Cref{alg:genapxad} returns a \solution of size at most $k+1$ whose score is at least $t$. 
\end{lemma}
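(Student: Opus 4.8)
\textbf{Proof proposal for Lemma~\ref{lem:additive-apx-correctness}.}

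The plan is to follow the three-case structure already laid out in the intuitive description preceding the lemma, and verify correctness in each case. First I would dispose of the two ``easy'' cases. If $|\C| \le g(k,d)$ for the relevant threshold function, the algorithm enumerates all subsets of $\C$ of size at most $k$, hence finds a genuine optimal committee of size $k$; since the instance is a \yes-instance, this committee has score at least $t$, and returning it trivially satisfies the claim (a $k$-sized committee is also a committee of size at most $k+1$). If instead $t$ is bounded by $f(k,d)$, I would argue that either some candidate of maximum degree already has $\score_G^\Lambda(\{v\}) \ge t$ (because each neighbor contributes at least $\lambda_{\min}$, so $\deg(v) \ge t/\lambda_{\min}$ suffices), in which case we are done with a singleton; or every candidate has degree below $\nicefrac{f(k,d)}{\lambda_{\min}}$, so applying \Cref{redrule:sunflower} exhaustively with $W = \nicefrac{f(k,d)}{\lambda_{\min}}$ is valid and, by \Cref{lemma:optsame}, preserves the answer while (by \Cref{lemma:kabfreesunflower}) shrinking $|\C|$ to a function of $k+d$; then brute force over all $\binom{|\C'|}{k}$ subsets again recovers an optimal committee, which has score $\ge t$.

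The substantive case is when neither bound holds. Here the algorithm calls \apxvot with $\epsilon = \nicefrac{\lambda_{\min}}{4k}$ to obtain a $k$-sized committee $S'$ with $\score_G^\Lambda(S') \ge (1-\epsilon)t$, and then considers the set $H$ of candidates of highest $\score(\cdot)$-value (of size bounded by a function of $k+d$). I would split on whether $S'$ is already good enough: if $\score_G^\Lambda(S') \ge t$ we return $S'$; otherwise the deficit is at most $\epsilon t = \nicefrac{\lambda_{\min} t}{4k}$, and I need to show that some $x \in H$ with $\score_G^{\Lambda_{S'}}(\{x\}) \ge$ (the marginal deficit) exists, so that $\score_G^\Lambda(S' \cup \{x\}) \ge t$ — \emph{unless} every optimal solution intersects $H$, in which case we can afford a different search. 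Concretely, fix an optimal committee $O$ (so $\score_G^\Lambda(O) \ge t$). If $O \subseteq H$ the brute-force search over subsets of $H$ finds it; so assume $O \setminus H \ne \emptyset$. The key estimate will be: because $|H|$ is large, $K_{d,d}$-freeness bounds how many voters a candidate outside $H$ can share with $S'$ (this is exactly the type of counting argument used in \Cref{claim:common-neighborhood} and \Cref{claim:bounding-alpha}), so the average marginal contribution $\score_G^{\Lambda_{S'}}(\{x\})$ over $x \in H$ is at least roughly $\score_G^{\Lambda_{S'}}(O \setminus H) / (\text{something small})$, which exceeds the deficit $\epsilon t$ by the choice $\epsilon = \nicefrac{\lambda_{\min}}{4k}$. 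Picking the best such $x$ then yields $\score_G^\Lambda(S' \cup \{x\}) \ge t$ with $|S' \cup \{x\}| = k+1$.

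The main obstacle I anticipate is precisely this last estimate: controlling the marginal contribution of a well-chosen high-score candidate $x$ relative to the committee $S'$ already produced. The difficulty is that $S'$ is only \emph{approximately} optimal and is not the optimal committee $O$, so the submodularity/averaging argument has to bridge between $O$ and $S'$ simultaneously — one must use both that $x$ has score at least as large as the candidates in $O \setminus H$ (by definition of $H$) and that $x$'s overlap with $S'$ is small (by $K_{d,d}$-freeness and $|H|$ large), and then balance these against the $(1-\epsilon)$ slack. Getting the quantifiers right — in particular ensuring $|H|$ is chosen large enough as a function of $k$, $d$, and $\lambda_{\min}$ that the per-candidate marginal clears the $\nicefrac{\lambda_{\min} t}{4k}$ bar — is where the real work lies; the rest is bookkeeping over the case split and invoking \apxvot, \Cref{redrule:sunflower}, \Cref{lemma:optsame}, and \Cref{lemma:kabfreesunflower} as black boxes. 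I would also double-check the edge cases where $\ell = |O \setminus H|$ is small, and confirm that deleting all-zero/empty voters (the $\V_0$, $\V_\emptyset$ reduction) does not affect any of the score computations.
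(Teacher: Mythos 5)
There is a genuine gap in the substantive case. Your plan is to show that whenever $S'$ falls short of $t$, some single candidate $x\in H$ has marginal contribution exceeding the deficit $\epsilon t=\nicefrac{\lambda_{\min}t}{4k}$, so that $S'\cup\{x\}$ always works (with a side remark that if $O\subseteq H$ a brute force over $H$ saves you). Neither half of this matches what is actually true or what the algorithm does. The correct statement (\Cref{lemma:oneadditive}) is only a dichotomy: \emph{either} there exists $x\in H$ with $\score_G^\Lambda(S'\cup\{x\})\geq t$, \emph{or} every solution $S$ satisfies $S\cap H\neq\emptyset$ --- and in the second branch one cannot conclude anything about augmenting $S'$; indeed, when some candidate of $H$ has score at most $\nicefrac{t}{k+1}$ the augmentation claim can simply fail, which is exactly why the lemma's proof splits on that case. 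Your averaging estimate over $H$ would therefore not go through in general, and the algorithm contains no brute force over subsets of $H$ to fall back on.

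What the paper does instead, and what your proposal omits entirely, is the recursive structure of \Cref{alg:genapxad} together with an induction on $k$. When no $x\in H$ tops up $S'$, \Cref{lemma:oneadditive} guarantees that any solution meets $H$; the algorithm then branches over $y\in H$, guesses $y$ into the committee, and recurses on the residual instance $(G_y,k-1,t-\score_G^\Lambda(y),\Lambda_y)$ with the prefix-truncated \owa vectors. Correctness of the inductive step hinges on the exact decomposition
\begin{align*}
\score_G^\Lambda(S_y\cup\{y\})=\score_G^\Lambda(\{y\})+\score_{G_y}^{\Lambda_y}(S_y),
\end{align*}
which is what turns a $(k-1)$-instance guarantee (a set of size at most $k$ with score at least $t-\score_G^\Lambda(y)$) into a set of size at most $k+1$ with score at least $t$ for the original instance; the additive $+1$ is incurred once, at the level where Step~\ref{line no:addappcase3} succeeds. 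Your proposal never sets up this induction, never verifies the decomposition under $\Lambda_y$ (including that deleting $\V_0,\V_\emptyset$ is harmless), and so cannot cover the case $O\cap H\neq\emptyset$ but $O\not\subseteq H$, which is the heart of the argument.
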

\begin{proof}
     Let ${\cal I}=(G,\Lambda,k,t)$ be a \yes-instance of \probone. We prove the correctness by induction on $k$.

   \noindent \emph{Base Case}: $k=0$. Since ${\cal I}$ is a \yes-instance, $t\leq 0$. Thus, empty set is a solution to ${\cal I}$ as returned by the algorithm. \\
\noindent \emph{Induction Step}: Suppose that the claim is true for all $i\leq k$.  Next, we argue for $k=i+1$.  If Step~\ref{line no:addappcase2} or \ref{line no:addappcase1} is executed, then since we try all possible subsets, for a \yes-instance, we return a set of size $k$. If the condition in Step~\ref{line no:FPT_-APX} is executed and we return a set in Step~\ref{line no:addappcase3}, then we return a set of size at most $(k+1)$ whose score is at least $t$. Suppose that we execute Step~\ref{line no:addappcase4}, then we first claim that for a  \yes-instance, ${\cal I}$, one of the instances in Step~\ref{line no:addappcase5} is a \yes-instance. Let $S$ be a solution to ${\cal I}$. We first note that since ${\cal I}$ is a \yes-instance, the condition in Step~\ref{line no:FPT_-APX} is true due to \Cref{thm:fpt-apx}. But, since we are executing Step~\ref{line no:addappcase4}, we did not find a desired set in Step~\ref{line no:addappcase3}. Thus, due to \Cref{lemma:oneadditive}, $S\cap H\neq \emptyset$. Suppose $y\in S \cap H$. Let $S'=S\setminus y$. Let $G_y=(G-y) - (\V_0 \cup \V_\emptyset)$. Then, $S'$ is a solution to ${\cal I}_y=(G_y,k-1,t-\score_{G}^\Lambda(y),\Lambda_y)$. Hence, due to induction hypothesis, there exists a $k$-sized subset of candidates $S_y$ such that  $\score_{G_y}^{\Lambda_y}(S_y)\geq t-\score_{G}^\Lambda(y)$.  Next, we argue that $\score_G^\Lambda(S_y \cup y)\geq t$.  

\begin{align*}
\score_G^\Lambda(S_y\cup y)=&\sum_{v\in \V}\sum_{i=1}^{|(S_y\cup y)\cap N(v)|}\lambda_i^v \\
=& \sum_{v \in N(y)}\lambda_1^v+\sum_{v \in N(y)}\sum_{i=2}^{|(S_y\cup y)\cap N(v)|}\lambda_i^v\\
&+\sum_{v\in \V \backslash N(y)}\sum_{i=1}^{|(S_y\cup\{y\})\cap N(v)|}\lambda_i^v
\end{align*}
Since
\begin{align*}
    \score_{G_y}^{\Lambda_y}(S_y) & =\sum_{v \in N(y)}\sum_{i=2}^{|(S_y\cup y)\cap N(v)|}\lambda_i^v \\
    & +\sum_{v\in \V \backslash N(y)}\sum_{i=1}^{|(S_y\cup\{y\})\cap N(v)|}\lambda_i^v
\end{align*}
In the above inequality, we considered $\lambda$ vectors in $\Lambda$.
Thus, we have that
\begin{align*}
\score_G^\Lambda(S_y\cup y)= & \score_{G}^\Lambda(y)+\score_{G_y}^{\Lambda_y}(S_y) \geq t.
\end{align*}
This completes the proof.
\end{proof}
\begin{algorithm}[tbh]
\caption{\addapxth : An \fpt algorithm for \oneadapp of \probonemax}\label{alg:genapxad}
\textbf{Input:} A bipartite graph $G=(\C,\V,E)$, a set $\Lambda =\{\lambda^v:v\in \V\}$, and non-negative integers $k$ and $t$.\\
\textbf{Output:} Either a set $S\subseteq \C$ s.t. $|S|\leq k+1 $ and $\score_G^\Lambda(S)\geq t$, or ``\no".
\begin{algorithmic}[1]
\If {$k=0$, $t\leq 0$ }
\Return an empty set. 
\EndIf
\If {$k=0$, $t> 0$ }
\Return \no
\EndIf

\If {$|\C|\leq k(d-1)(4k^2)^{d-1}+1$} \label{line no:addappcase2}
\If {there exists a $k$-sized set $S\subseteq \C$ s.t. $\score_G^\Lambda(S)\geq t$} 
\Return $S$ 

\Else{ \Return \no} 
\EndIf
\EndIf
\If  {$t \leq  8k^4d\lambda_{min}$} \label{line no:addappcase1}
\State apply Reduction Rule~\ref{redrule:sunflower} exhaustively with $W=\frac{t}{\lambda_{\min}}$, 

\If {there exists a set $S\subseteq \C$ s.t. $\score_G^\Lambda(S)\geq t$}  
\Return $S$

\Else{ \Return \no}

\EndIf
\EndIf

\If {\apxvot$(G,k,t,\epsilon=\frac{\lambda_{\min}}{4k},\Lambda)$ returns a set $S'$}
\label{line no:FPT_-APX}
\hide{\State Let $S'= \apxvot(G,k,t,\epsilon=\frac{\lambda_{\min}}{4k},\Lambda)$}
\State Let $H \subseteq \C$ be a set of $k(d-1)(4k^2\lambda_{\min})^{d-1}+1$ candidates of highest  score.
\If {there exists $x \in H$ such that $\score_G^\Lambda(\{x\}\cup S')\geq t$} \label{line no:addappcase3}
\Return $S' \cup \{x\}$
\EndIf
\Else \For{$y \in H$}\label{line no:addappcase4}
\State let $G_y=(G \sm \{y\}) \sm (\V_{0} \cup \V_{\emptyset})$

\hide{\COMMENT{{\color{blue} after deleting $y$, \owa vectors of some of the voters might become empty or zero}.}}

\If{\addapxth$(G_y,k-1,t-\score_{G}^\Lambda(y),\Lambda_y)$ returns a set $S$}\label{line no:addappcase5} 
\Return $S \cup \{y\}$
\EndIf
\EndFor
\EndIf
\end{algorithmic}
\end{algorithm}

For proving \Cref{lem:additive-apx-correctness}, we establish a crucial result (in \Cref{lemma:oneadditive}) that forms the core of our algorithm. 
Towards this, we first
define a notion of \rm{High Degree Set} as follows.

\hide{
\il{Why restating the problem }
We restate the problem again for easier readability. 
\defparprob{\probonegen}{A bipartite graph $G=(A,B)$, integer $k$, $t$ and a set, $\Lambda$, of  non-increasing vectors $\lambda^v=(\lambda_1^v,\lambda_2^v,\cdots \lambda_{deg(v)}^v)$ for all $v\in B$}{$k$}{Does there exist $S\subseteq A$ such that $|S|\leq k $ and $\score (S)=\sum_{v \in B}u_{G,v}(S)\geq t$ where $u_{G,v}(S)=\sum_{j=1}^{|N_G(v)\cap S|}\lambda_j^v$ ?}
}

\begin{definition}{\rm \cite{DBLP:conf/soda/0001KPSS0U23}}{\rm [$\beta$-High Degree Set]} \label{def:def3} 
Given a bipartite graph $G=(A,B,E)$, a set $X\subseteq B$, and a positive integer $\beta > 1$, the {\em $\beta$-High Degree Set}, is defined as: 

\hide{denoted by ${\sf HD}_{\beta}^{G}(X) \subseteq A$, is a set of vertices s.t. for each vertex $v \in {\sf HD}_{\beta}^{G}(X) $ satisfies $|N(v)\cap X| \geq \frac{|X|}{\beta}$ and $|N(v)| \geq d$, i.e.,}
\begin{center}
$ {\sf HD}_{\beta}^{G}(X) = \{v: v\in A, |N(v)|\geq d, |N(v)\cap X| \geq \frac{|X|}{\beta}\}$
\end{center}
\end{definition}

Interestingly, the size of $\beta$-High Degree Set is bounded for $K_{d,d}$-free bipartite graphs as shown by the following.

\begin{proposition}\label{lemma:hibound}{\rm \cite{DBLP:conf/soda/0001KPSS0U23}}
For all $d$ and for all $\beta > 1$ with $\frac{|X|}{2\beta} > d$, where $X\subseteq B$, if $G = (A, B, E)$ is \kddfree\hide{bipartite graph}, then 
$|{\sf HD}_{\beta}^{G}{(X)}| \leq f(\beta, d) = (d-1)(2\beta)^{d-1}$.
\end{proposition}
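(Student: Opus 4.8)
This bound has the flavour of the K\H{o}v\'ari--S\'os--Tur\'an inequality, so the plan is to run a double-counting argument on $d$-element subsets of $X$ and produce a $K_{d,d}$ in $G$ as soon as ${\sf HD}_{\beta}^{G}(X)$ is too large. Write $x = |X|$ and $N = |{\sf HD}_{\beta}^{G}(X)|$, and suppose for contradiction that $N > (d-1)(2\beta)^{d-1}$. First I would record two easy consequences of the hypothesis $x/(2\beta) > d$: every $v \in {\sf HD}_{\beta}^{G}(X)$ has $|N(v) \cap X| \ge \lceil x/\beta \rceil > 2d$, so in particular $|N(v)| \ge d$ and the degree clause of Definition~\ref{def:def3} is automatically satisfied and can be ignored; moreover, any $d$ candidates of ${\sf HD}_{\beta}^{G}(X)$ together with $d$ common neighbours in $X$ form a complete bipartite $K_{d,d}$, and since both of its sides are independent in the bipartite graph $G$ this copy is necessarily induced. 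Hence it suffices to exhibit $d$ vertices of ${\sf HD}_{\beta}^{G}(X)$ sharing $d$ common neighbours in $X$.

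Next I would count the incidences $(v, T)$ with $v \in {\sf HD}_{\beta}^{G}(X)$ and $T$ a $d$-subset of $N(v) \cap X$. Counting by $v$ and using that $y \mapsto \binom{y}{d}$ is non-decreasing for integers $y \ge d$ gives at least $N \binom{\lceil x/\beta\rceil}{d}$ incidences. Counting by $T$, the $K_{d,d}$-freeness of $G$ forces each fixed $d$-subset $T \subseteq X$ to lie in the neighbourhood of at most $d-1$ vertices of ${\sf HD}_{\beta}^{G}(X)$ (otherwise those $\ge d$ candidates together with $T$ would span a $K_{d,d}$), so there are at most $(d-1)\binom{x}{d}$ incidences. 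Comparing the two counts yields $N \le (d-1)\binom{x}{d}/\binom{\lceil x/\beta\rceil}{d}$.

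The remaining step is the estimate $\binom{x}{d}/\binom{\lceil x/\beta\rceil}{d} \le (2\beta)^{d-1}$, which contradicts $N > (d-1)(2\beta)^{d-1}$ and closes the argument. Writing the ratio as $\prod_{i=0}^{d-1}\frac{x-i}{\lceil x/\beta\rceil - i}$, the hypothesis $x/(2\beta) > d$ is exactly what makes this go through: it guarantees $\lceil x/\beta\rceil - i \ge \lceil x/\beta\rceil - d > \tfrac{1}{2}\lceil x/\beta\rceil$ for every $i \le d-1$, so each factor is controlled by a constant multiple of $\beta$ and the product over the $d$ indices collapses to the claimed power. I expect this last numerical estimate --- tracking the rounding in $\lceil x/\beta\rceil$ and pinning down the precise constant (the clean $2$ per factor rather than a lossier $c\beta$) --- to be the only genuinely fiddly point; everything else is the textbook $K_{d,d}$-free double count. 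If one prefers to avoid the bookkeeping altogether, an alternative is to apply the K\H{o}v\'ari--S\'os--Tur\'an theorem verbatim to the bipartite subgraph of $G$ between ${\sf HD}_{\beta}^{G}(X)$ and $X$, whose minimum degree on the ${\sf HD}$ side is $\ge x/\beta$, and read off the bound; this is essentially the route of~\cite{DBLP:conf/soda/0001KPSS0U23}.
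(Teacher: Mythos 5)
Your set-up is fine — across a bipartition an induced $K_{d,d}$ is the same as a subgraph copy, every $d$-subset of $X$ lies in the neighbourhood of at most $d-1$ members of ${\sf HD}_{\beta}^{G}(X)$, and the double count correctly gives $N\binom{\lceil x/\beta\rceil}{d}\leq (d-1)\binom{x}{d}$ — but the final estimate you lean on, $\binom{x}{d}/\binom{\lceil x/\beta\rceil}{d}\leq(2\beta)^{d-1}$, is false, and that is exactly where the stated bound was supposed to come from. The ratio is a product of $d$ factors $\frac{x-i}{\lceil x/\beta\rceil-i}$, each of size roughly $\beta$; the hypothesis $\frac{x}{2\beta}>d$ only caps each factor by $2\beta$, and nothing makes the product ``collapse'' to $d-1$ factors. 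Concretely, for $d=2$, $\beta=10$, $x=1000$ the ratio is $\frac{1000\cdot 999}{100\cdot 99}\approx 101$, far above $(2\beta)^{d-1}=20$. What your count actually proves is $|{\sf HD}_{\beta}^{G}(X)|\leq (d-1)\beta(2\beta)^{d-1}$ (roughly $(d-1)\beta^{d}$), an extra factor of $\beta$ over the displayed $f(\beta,d)$; invoking K\H{o}v\'ari--S\'os--Tur\'an as a black box loses the same factor. Note also that this paper contains no proof of the proposition (it is imported from the SODA'23 reference), so the only question is whether your route reaches the stated bound — it does not.

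Moreover, the gap is not a bookkeeping issue that more careful rounding could fix: with the statement exactly as reproduced here, the exponent $d-1$ is unreachable because the statement itself fails. Take $d=2$ and let $G$ be the point--line incidence graph of the projective plane ${\rm PG}(2,4)$, with $A$ the $21$ lines and $B=X$ the $21$ points: $G$ is \kddfree for $d=2$ (two lines share exactly one point), every line has $5$ points, and with $\beta=4.2$ we get $|X|/\beta=5$ and $|X|/(2\beta)=2.5>d$, so all $21$ lines belong to ${\sf HD}_{\beta}^{G}(X)$, while $(d-1)(2\beta)^{d-1}=8.4$. So either the transcription of the proposition in this paper (the exponent, or the threshold constants in the definition of ${\sf HD}$) deviates from the source lemma, or the source works with different hypotheses; in either case your plan can only certify a bound of the form $(d-1)\beta(2\beta)^{d-1}$. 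If you want a self-contained correct statement, prove and use the bound your double count actually gives — a bound of that shape still depends only on $\beta$ and $d$, which is all that the proposition's use in \Cref{lemma:oneadditive} appears to require after adjusting the size of $H$ — or track down the precise formulation in the cited SODA'23 paper rather than asserting the estimate $\binom{x}{d}/\binom{\lceil x/\beta\rceil}{d}\leq(2\beta)^{d-1}$, which does not hold once $\beta>2^{d-1}$.
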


Next, we move towards proving the core result for our algorithm.

\begin{lemma}\label{lemma:oneadditive}
Let $(G,\Lambda,k,t)$ be an instance of \probone, where $G$ is a \kddfree graph with $V(G) = \C \uplus \V$. Let $\ell \leq k$ and $t'\leq t$ be two positive integers. Let $S'\subseteq \C$ be an $\ell$-sized set such that $\score_G^\Lambda(S')\geq t'(1-\frac{\lambda_{\min}}{4\ell})$, where $t'\geq 8\ell^4d\lambda_{\min}$,  and $H$ be a set of $\ell (d-1)(4\ell^2\lambda_{\min})^{d-1}+1$ highest score candidates in $\C$. For any $S\subseteq \C$ of size $\ell$ with $\score_G^\Lambda(S)\geq t'$, either $S\cap H\neq\emptyset$ or there exists a candidate $ x \in H$ such that $\score_G^\Lambda(\{x\}\cup S') \geq t'$. If there exists a vertex $v$ of degree at least $\frac{t'}{\lambda_{\min}}$, then $S'=\{v\}$. 
\end{lemma}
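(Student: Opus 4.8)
I want to prove Lemma~\ref{lemma:oneadditive}. Fix a solution $S\subseteq\C$ with $|S|=\ell$ and $\score_G^\Lambda(S)\ge t'$, and assume $S\cap H=\emptyset$; under this assumption I must exhibit $x\in H$ with $\score_G^\Lambda(S'\cup\{x\})\ge t'$. The ``escape hatch'' at the end of the statement handles the case where some vertex has degree at least $t'/\lambda_{\min}$: then that vertex alone is a solution, so assume from now on every candidate has degree strictly less than $t'/\lambda_{\min}$. The guiding idea is that $S'$ already achieves score $\ge t'(1-\frac{\lambda_{\min}}{4\ell})$, i.e.\ it is short of $t'$ by at most $\frac{\lambda_{\min}}{4\ell}\,t'$. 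So I need to find a single candidate $x\in H$ whose \emph{marginal contribution} $\score_G^\Lambda(S'\cup\{x\})-\score_G^\Lambda(S')$ is at least $\frac{\lambda_{\min}}{4\ell}\,t'$. Since each voter $v\in N(x)\setminus N(S')$ contributes at least $\lambda_{\min}$ to this marginal (and voters in $N(x)\cap N(S')$ contribute a non-negative amount), it suffices to find $x\in H$ with $|N(x)\setminus N(S')|\ge \frac{t'}{4\ell}$, equivalently $|N(x)|-|N(x)\cap N(S')|\ge\frac{t'}{4\ell}$.

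\textbf{Main steps.} First, I would bound $|N(S')|$: since $\score_G^\Lambda(S')\le t'$ and every served voter contributes at least $\lambda_{\min}$, we get $|N(S')|\le t'/\lambda_{\min}$. Second, I would consider the set $X:=N(S)$, the voters served by the target solution $S$; since $\score_G^\Lambda(S)\ge t'$ and each voter contributes at most $|S|=\ell$ (as $\lambda^v_i\le 1$), we have $|X|=|N(S)|\ge t'/\ell$. Now look at the candidates in $S$: each $c\in S$ has, on average over $S$, at least $|X|/\ell\ge t'/\ell^2$ neighbors in $X$ — more precisely, by pigeonhole some $c\in S$ satisfies $|N(c)\cap X|\ge |X|/\ell$, but I actually want something about \emph{many} high-degree-into-$X$ candidates. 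The cleaner route: apply the $\beta$-High Degree Set machinery (Definition~\ref{def:def3}, Proposition~\ref{lemma:hibound}) with $X=N(S)$ and $\beta=2\ell^2\lambda_{\min}$ (chosen so $f(\beta,d)=(d-1)(2\beta)^{d-1}=(d-1)(4\ell^2\lambda_{\min})^{d-1}$, matching the size bound on $H$ minus one, after multiplying by $\ell$); one checks $\frac{|X|}{2\beta}=\frac{|N(S)|}{4\ell^2\lambda_{\min}}\ge\frac{t'/\ell}{4\ell^2\lambda_{\min}}=\frac{t'}{4\ell^3\lambda_{\min}}>d$, using the hypothesis $t'\ge 8\ell^4d\lambda_{\min}$. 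Thus $|{\sf HD}_\beta^G(X)|\le (d-1)(4\ell^2\lambda_{\min})^{d-1}$.

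\textbf{Finishing.} Since $H$ consists of the $\ell(d-1)(4\ell^2\lambda_{\min})^{d-1}+1$ highest-score candidates, and — here I need the auxiliary fact that candidates in $S$ have high score, because each $c\in S$ contributing heavily to $\score_G^\Lambda(S)$ has large degree, hence large $\score_G^\Lambda(\{c\})$, which would force some of $S$ into $H$ unless $H$ is already ``full'' of even-higher-score candidates — I would argue that at least one candidate $x\in H$ lies outside ${\sf HD}_\beta^G(X)$ and outside $N(S')$-heavy candidates. Concretely: $H$ has more than $\ell\cdot|{\sf HD}_\beta^G(X)|$ elements; the set of candidates $z$ with $|N(z)\cap N(S')|$ too large is also controllable because $|N(S')|\le t'/\lambda_{\min}$ and the graph is $K_{d,d}$-free, so only few candidates can have $\ge d$ neighbors inside a fixed small set. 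Removing from $H$ the (few) candidates in ${\sf HD}_\beta^G(X)$ and the (few) candidates sharing many neighbors with $S'$ leaves some $x\in H$ with both $|N(x)\cap N(S)|<|N(S)|/\beta$ and $|N(x)\cap N(S')|$ small; combined with the fact that $x\in H$ has score at least that of the candidates of $S$ (so $|N(x)|$ is large, at least $\ge t'/(4\ell^3\lambda_{\min})\cdot(\text{something})$), this yields $|N(x)\setminus N(S')|\ge t'/(4\ell)$, hence the desired marginal. I expect the bookkeeping of the three competing small sets inside $H$ (high-degree-into-$N(S)$ candidates, high-overlap-with-$N(S')$ candidates, and ensuring the remaining $x$ still has enough \emph{total} degree) to be the main obstacle — getting the constants to line up exactly with $\frac{\lambda_{\min}}{4\ell}$ and the stated size of $H$ is the delicate part, and it is where the precise choice $\beta=2\ell^2\lambda_{\min}$ and the hypothesis $t'\ge 8\ell^4 d\lambda_{\min}$ must be used.
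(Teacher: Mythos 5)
The first half of your outline is sound and essentially matches the paper's argument in contrapositive form: reducing the task to finding $x\in H$ with $|N(x)\setminus N(S')|\geq \frac{t'}{4\ell}$ (so that the marginal over $S'$ is at least $\frac{t'\lambda_{\min}}{4\ell}$, closing the gap), handling the high-degree escape hatch, and noting that $S\cap H=\emptyset$ forces every $x\in H$ to have singleton score (hence degree) at least roughly $t'/\ell$ by subadditivity — the paper phrases this as a case split on whether some member of $H$ has score at most $\frac{t'}{\ell+1}$, but it is the same observation.

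The finishing counting argument, however, has a genuine gap, and it is exactly the part you flag as "delicate." Two problems. First, your claim that, by $K_{d,d}$-freeness, "only few candidates can have $\geq d$ neighbors inside a fixed small set" such as $N(S')$ is false: $K_{d,d}$-freeness only forbids $d$ candidates sharing the \emph{same} $d$ voters, and a fixed voter set of size $M$ can host up to roughly $\binom{M}{d}(d-1)$ candidates each with $d$ neighbors in it, which is not bounded by any function of $\ell,d,\lambda_{\min}$. What Proposition~\ref{lemma:hibound} bounds is the number of candidates whose overlap with a fixed set $X$ is at least a \emph{constant fraction} $|X|/\beta$, and the threshold you must beat here is $\approx t'/(2\ell)$ neighbors in $N(S')$, not $d$. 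Second, you apply the high-degree-set machinery to $X=N(S)$, but the overlap that must be controlled is with $N(S')$ — the neighborhood of the approximate solution — since the goal is a large marginal over $S'$; membership or non-membership in ${\sf HD}_\beta^{G}(N(S))$ is irrelevant to that. The paper's proof proceeds by contradiction: if every $y\in H$ fails, then $|N(y)\setminus N(S')|<\frac{t'}{4\ell}$, so $|N(y)\cap N(S')|\gtrsim \frac{t'}{2\ell}$; by pigeonhole each such $y$ has $|N(y)\cap N(u)|\geq \frac{t'}{2\ell^2}$ for some $u\in S'$, and some single $u\in S'$ collects at least $|H|/\ell$ of them; since $\deg(u)\leq t'/\lambda_{\min}$ this overlap is a $\frac{1}{\beta}$-fraction of $N(u)$ for $\beta=\Theta(\ell^2/\lambda_{\min})$, so Proposition~\ref{lemma:hibound} applied to $X=N(u)$ bounds the number of such $y$ per $u$ by $(d-1)(2\beta)^{d-1}$, giving $|H|\leq \ell(d-1)(2\beta)^{d-1}$ and contradicting the prescribed size of $H$. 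Your outline is missing this pigeonhole onto a single $u\in S'$ and the application of the proposition to $N(u)$ (or, alternatively, to $N(S')$ itself with an adjusted $\beta$), and without it the constants and the stated size of $H$ do not line up.
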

\begin{proof}
Suppose that there exists a candidate $x\in H$ whose score is at most $\frac{t'}{\ell+1}$. Then, every candidate in $\C \setminus H$ has score at most $\frac{t'}{\ell+1}$. In this case, we claim that $S\cap H\neq \emptyset$. Suppose not, then $S\subseteq \C \setminus H$, and hence $\score_G^\Lambda(S) \leq \frac{\ell t'}{\ell+1} < t'$, a contradiction. Thus, in this case $S\cap H \neq \emptyset$. Next, we consider that the score of every candidate in $H$ is more than $\frac{t'}{\ell+1}$. In this case, we will show that there exists a candidate $x \in H$ such that $\score_G^\Lambda(\{x\}\cup S') \geq t'$. 
Towards the contradiction, suppose that for every $y \in H$, $\score_G(\{y\}\cup S') < t'$. Clearly, due to the lemma statement, the degree of every vertex is at most $\nicefrac{t'}{\lambda_{\min}}$. Since $\score_G^\Lambda(S')\geq t'(1-\frac{\lambda_{\min}}{4\ell})$, for all $y\in H$, $|N(y)\backslash N(S')|<\frac{t'\lambda_{\min}}{4\ell\lambda_{\min}}=\frac{t'}{4\ell}$. We next argue that every $y\in H$ has a large neighborhood in $S'$.  
\begin{align*}
|N(y)\cap N(S')| &= |N(y)|-|N(y)\backslash N(S')|
\end{align*}
Recall that $\score_G^\Lambda(y)> \frac{t'}{\ell+1}$, and $\lambda^v$ is a non-increasing vector with $\lambda^v_1 \leq 1$.  Thus, $|N(y)|> \frac{t'}{\ell+1}$. Hence, 
\begin{align*}
 |N(y)\cap N(S')| &\geq  
 \frac{t'}{\ell+1}-\frac{t'}{4\ell}\\& =t'\left(\frac{1}
{\ell+1}-\frac{1}{4\ell}\right)\\& \geq \frac{t'}{2\ell}.
\end{align*}
Thus, using the pigeonhole principle, for every $y\in H$, there exists a candidate $u \in S'$ such that 
\begin{align}
    |N(y)\cap N(u)|\geq \frac{|N(y)\cap N(S')|}{|S'|}\geq \frac{t'}{2\ell^2}
\label{eq:add-apx-large-intersection}\end{align}

Again, using the pigeonhole principle, we know that there exists a candidate $u \in S'$ such that there are at least $\frac{|H|}{\ell}$ many candidates $y\in H$, with $|N(y)\cap N(u)|\geq \frac{t'}{2\ell^2}$. We denote all these vertices by $H_u$. Consequently, we have $|H| \leq \ell|H_u|$.

Next, we construct a bipartite graph $G_u=G[H_u\cup N(u)]$.  

Consider $\beta=2\ell^2\lambda_{\min}$. Since $H_u \subseteq H$, for every vertex $y\in H_u$, $\score_G^\Lambda(y)\geq \frac{t'}{\ell+1}$. Since $t' \geq 4\ell^2d$, $\score_G^\Lambda(y)\geq d$. Note that $|N(y)|\geq \score_G^\Lambda(y)$ as $\lambda_{\min}\leq 1$. Thus, $|N(y)|\geq d$. Recall that the degree of every vertex is at most $\nicefrac{t'}{\lambda_{\min}}$. Thus, $|N(u)|\leq \nicefrac{t'}{\lambda_{\min}}$. Hence,
\begin{align*}
    |N(y)\cap N(u)| \geq \frac{t'}{2\ell^2} \geq \frac{\lambda_{\min}|N(u)|}{2\ell^2}=\frac{|N(u)|}{\beta} 
\end{align*}
We can also assume $\frac{|N(u)|}{2\beta}>d$, otherwise $|N(u)|\leq 2\beta d\leq 4\ell^2\lambda_{\min}d$. Then, for each vertex $y\in H_u$, $4k^2\lambda_{min}d\geq  |N(y)\cap N(u)| \geq \frac{t'}{2\ell^2}$ which implies $t'\leq 8k^4\lambda_{\min}d$ which is a contradiction to our assumption of $t'$

Thus, due to \Cref{def:def3}, $H_u \subseteq {\sf HD}_{\beta}^{G_u}{(N(u))}$. By applying~\Cref{lemma:hibound} on $N(u)$, we get $|H_u|\leq |{\sf HD}_{\beta}^{G_u}{(N(u))}| \leq (d-1)(4\ell^2\lambda_{\min})^{d-1}$. 

Recall that $|H| \leq \ell|H_u|$. Hence, we have $|H| \leq \ell (d-1)(4\ell^2\lambda_{\min})^{d-1}$. But this contradicts the definition that $|H|\geq \ell (d-1)(4\ell^2\lambda_{\min})^{d-1}+1$.

\end{proof}

\textbf{Running Time:} The running time of the algorithm is governed by the following recurrence relation
\begin{enumerate}
\item $T(k) \leq  (k (d-1)(4k^2\lambda_{min})^{d-1}+1) \cdot T(k-1)+ (kd)^{\OO(kd)}+ (k (d-1)(4k^2\lambda_{min})^{d-1}+1)^{k+1}n^{\mathcal{O}(1)}+\left(\frac{dk}{\epsilon}\right)^{\OO(d^2k)} n^{\OO(1)}$ (where $\epsilon=\frac{\lambda_{min}}{4k}$).
\item $T(0)=n^{\mathcal{O}(1)}$.
\end{enumerate}
This is because in the first three cases the algorithm takes time $(kd)^{\OO(kd)}$, $(k (d-1)(4k^2\lambda_{min})^{d-1}+1)^{k+1}n^{\mathcal{O}(1)}$, and $(dk)^{\OO(d^2k)} n^{\OO(1)}$ respectively. Solving the recurrence, we get $T(k)\leq (dk)^{\OO(d^2k)}n^{\OO(1)}$.

\begin{theorem}\label{theorem:addapprox}
    There exists an algorithm for \probonegen that runs in time $(dk)^{\OO(d^2k)}n^{\OO(1)}$, and returns a set $S \subseteq A$ of size at most $k+1$ such that $\score_G(S)\geq t$. 
\end{theorem}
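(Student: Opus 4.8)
The algorithm witnessing the theorem is \addapxth\ (\Cref{alg:genapxad}); the plan is to establish (i) that on a \yes-instance it outputs a committee of size at most $k+1$ with score at least $t$, (ii) that it outputs \no\ only on \no-instances, and (iii) the claimed running time. Part (i) is exactly \Cref{lem:additive-apx-correctness}, which I would prove by induction on $k$. The base case $k=0$ is immediate, since a \yes-instance then has $t\le 0$ and $\emptyset$ works. In the inductive step I would case on which branch of \Cref{alg:genapxad} fires. The two ``small-instance'' branches (Steps~\ref{line no:addappcase2} and~\ref{line no:addappcase1}) return a genuine $k$-sized committee of score $\ge t$ by exhaustive search; the second of these first applies \Cref{redrule:sunflower} with $W=t/\lambda_{\min}$ and uses that a vertex of degree $\ge t/\lambda_{\min}$ is itself a solution, so that after the reduction $|\C|$ is bounded by a function of $k+d$ and brute force is affordable.

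The interesting branch runs \apxvot\ with $\epsilon=\lambda_{\min}/(4k)$ to get a $k$-sized $S'$ with $\score_G^\Lambda(S')\ge(1-\epsilon)t$ (\Cref{thm:fpt-apx}), and forms the set $H$ of the $k(d-1)(4k^2\lambda_{\min})^{d-1}+1$ highest-score candidates. Here the key structural input is \Cref{lemma:oneadditive}, instantiated with $\ell=k$ and $t'=t$: it says that for every optimal committee $S$, either $\score_G^\Lambda(S'\cup\{x\})\ge t$ for some $x\in H$ (so Step~\ref{line no:addappcase3} succeeds), or $S\cap H\neq\emptyset$. In the latter case I would fix $y\in S\cap H$, note that $S\setminus\{y\}$ is a solution to $(G_y,k-1,t-\score_G^\Lambda(y),\Lambda_y)$ where $G_y$ is $G$ with $y$ and the newly all-zero/empty voters removed, apply the induction hypothesis to obtain a $k$-sized $S_y$ with $\score_{G_y}^{\Lambda_y}(S_y)\ge t-\score_G^\Lambda(y)$, and close the argument with the additivity identity $\score_G^\Lambda(S_y\cup\{y\})=\score_G^\Lambda(\{y\})+\score_{G_y}^{\Lambda_y}(S_y)\ge t$; the resulting set has size at most $k+1$. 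Part (ii) then follows because every \no-output in \Cref{alg:genapxad} is produced only after an exhaustive check whose completeness is certified by exactly this case analysis (the brute-force searches are complete, and \Cref{lemma:oneadditive} certifies the $H$-based branch).

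For part (iii) I would set up the recurrence $T(k)\le B\cdot T(k-1)+C\cdot n^{\mathcal{O}(1)}$ with branching factor $B=k(d-1)(4k^2\lambda_{\min})^{d-1}+1=(dk)^{\mathcal{O}(d)}$ (the size of $H$) and per-node cost $C$ equal to the maximum of the three non-recursive costs: $(kd)^{\mathcal{O}(kd)}$ for the first brute-force branch, $\big(k(d-1)(4k^2\lambda_{\min})^{d-1}+1\big)^{k+1}$ for the second, and the \apxvot\ cost $(dk/\epsilon)^{\mathcal{O}(d^2k)}=(dk)^{\mathcal{O}(d^2k)}$ with $\epsilon=\lambda_{\min}/(4k)$, using that $\lambda_{\min}$ is a fixed constant bounded away from $0$. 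Since the recursion depth is $k$, unrolling gives $T(k)\le B^k\cdot C\cdot n^{\mathcal{O}(1)}=(dk)^{\mathcal{O}(dk)}\cdot(dk)^{\mathcal{O}(d^2k)}\cdot n^{\mathcal{O}(1)}=(dk)^{\mathcal{O}(d^2k)}n^{\mathcal{O}(1)}$, as claimed.

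I expect the main obstacle to be the parameter bookkeeping around \Cref{lemma:oneadditive}: one must verify that whenever the ``interesting'' branch is reached, the hypotheses $t'\ge 8\ell^4 d\lambda_{\min}$ and $|\C|>k(d-1)(4k^2)^{d-1}+1$ of that lemma actually hold (these are precisely what Steps~\ref{line no:addappcase2}--\ref{line no:addappcase1} rule out), that \apxvot's guarantee with $\epsilon=\lambda_{\min}/(4k)$ coincides with the demand $\score_G^\Lambda(S')\ge t'(1-\lambda_{\min}/(4\ell))$ of the lemma, and that passing to $(G_y,k-1,t-\score_G^\Lambda(y),\Lambda_y)$ preserves \kddfree-ness and the non-increasing/nonzero structure of the \owa\ family so that the induction hypothesis applies verbatim.
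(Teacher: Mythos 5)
Your proposal follows essentially the same route as the paper: it uses \addapxth\ (\Cref{alg:genapxad}), proves correctness by the same induction on $k$ via \Cref{lem:additive-apx-correctness} with \Cref{lemma:oneadditive} as the key structural lemma and the additivity identity $\score_G^\Lambda(S_y\cup\{y\})=\score_G^\Lambda(\{y\})+\score_{G_y}^{\Lambda_y}(S_y)$, and derives the running time from the same branching recurrence with factor $|H|$ and the \apxvot\ call at $\epsilon=\lambda_{\min}/(4k)$. The argument and the bookkeeping you flag (the thresholds ruled out by Steps~\ref{line no:addappcase2}--\ref{line no:addappcase1} and preservation of \kddfree-ness under deletion) match the paper's treatment, so the proposal is correct.
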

\section{Parameterized by Threshold}\label{sec:parabyt}
For the sake of clarity we restate our problem in the \owa framework. Note that we provide an algorithm for the case when every voter has same \owa vector. For more details regarding equivalence with the \owa framework, refer to~\Cref{section:owaframework}.
\defparprob{\probonegen}{A bipartite graph $G=(\C,\V,E)$, a non-increasing vector $\lambda=(\lambda_1,\lambda_2,\ldots, \lambda_{k})$, and positive integers $k$ and $t$.}{$k$}{Does there exist $S\subseteq \C$ such that $|S|\leq k $ and $\score_G(S)=\sum_{v \in \V}f_{G,v}(S)\geq t$ where $f_{G,v}(S)=\sum_{j=1}^{|N_G(v)\cap S|}\lambda_j$ ? } 
For the special case of PAV we have $\lambda=\{1,\frac{1}{2},\ldots \frac{1}{k}\}$. We also know that $\lambda_1\leq 1$ (\Cref{section:owaframework})
\begin{lemma}
    \probonegen admits an \fpt algorithm parameterized by $t+k$. 
\end{lemma}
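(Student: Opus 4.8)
The goal is an FPT algorithm parameterized by the threshold $t$ for \probonegen\ (with a uniform \owa vector, hence a monotone submodular Thiele rule). The key structural observation is that if the instance is a \yes-instance, then a solution $S$ with $\score_G(S) \geq t$ can only "touch" a bounded number of voters: since every voter $v \in N(S)$ contributes at least $\lambda_{\min}$ (a positive constant depending only on the rule) to the score, we would have $|N(S)| \le t/\lambda_{\min}$ in a minimal solution — but more importantly, we only ever need to worry about the voters that a good solution covers. I would first handle a cheap case: if any candidate $c$ has $\deg(c) \geq t/\lambda_{\min}$, then $\{c\}$ alone already has score $\geq t$, so return \yes. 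Hence assume every candidate has degree less than $t/\lambda_{\min}$; in particular every candidate is "small."

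\textbf{Color coding.} The plan is to apply the color-coding technique of Alon, Yuz and Zwick. The relevant universe to color is the set of \emph{voters} that a hypothetical solution satisfies. If $S$ is a solution of size at most $k$, let $U = N(S)$ be the set of voters it covers; we argued $|U| \le t/\lambda_{\min}$, a function of $t$ only (recall $\lambda_{\min} \le \lambda_1 \le 1$ and for PAV $\lambda_{\min}=1/k \ge 1/t$ since $k \le t$ in any \yes-instance, so actually $|U| \le t^2$ — in general $\lambda_{\min}$ is a fixed rational of the rule, so the bound is $g(t)$). Color the voter set $\V$ uniformly at random with $N := \lceil t/\lambda_{\min}\rceil$ colors (or the slightly larger number needed for the standard analysis), so that with probability at least $N^{-N} \ge (g(t))^{-g(t)}$ the voters in $U$ receive pairwise distinct colors; derandomize using an $(|\V|, N)$-perfect hash family, of which one can be constructed in time $2^{\cO(N)} |\V|^{\cO(1)}$, i.e.\ FPT in $t$.

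\textbf{Dynamic programming on a colorful solution.} Fix a coloring in which the target $U$ is colorful. Now run a DP over subsets of colors: for a color subset $P \subseteq [N]$ and an integer $i \le k$, let $D[P,i]$ be the maximum value of $\score$ contributed \emph{restricted to voters whose color lies in $P$} by a candidate set of size at most $i$ all of whose covered voters (among those actually contributing) lie in color classes of $P$ — more carefully, we want to build the solution candidate by candidate, at each step picking a candidate $c$, the set of \emph{newly covered} voters having colors forming a subset $Q$ disjoint from the colors used so far; since the added candidate may also re-cover already-covered voters, the marginal contribution depends on how many times each such voter was already covered, so the state must also track, for each color currently in $P$, the current coverage count of the unique voter of that color that we decided $S$ covers. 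That blows the state up, so instead I would track, for the DP, a function $P \to \{0,1,\dots,k\}$ recording the current multiplicity of the (at most one) chosen voter per used color; the number of such functions is $(k+1)^{N}$, still FPT in $t$. The transition adds one candidate, chooses which of its neighbors (all of whose colors must be "assigned" consistently) count toward the solution, updates the multiplicities, and adds the corresponding marginal $\lambda$-increments. Accept if some state with $i \le k$ reaches total value $\ge t$.

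\textbf{Main obstacle.} The crux — and where care is genuinely needed — is that the contribution of a candidate to the score is \emph{not} additive across candidates: a voter covered $j$ times contributes $\lambda_1 + \dots + \lambda_j$, not $j\lambda_1$, so a naive "sum of marginal coverages" DP overcounts. This is exactly why the state needs to remember coverage multiplicities per color class, and why colorfulness (each relevant voter isolated in its own color) is what makes this tractable: it lets the DP treat the covered voters independently, one per color. The second subtlety is ensuring the bound on $|U|$ is genuinely a function of $t$ alone — this needs $\lambda_{\min}$ to be bounded below by a function of $t$, which holds because in a \yes-instance $k \le t$ (each candidate contributes at most $\lambda_1 \le 1$... more precisely $\score(S) \le |\V|$, but for the solution size: if $k > t$ we can shrink $S$), and $\lambda_{\min} = \lambda_k \ge$ a fixed positive rational determined by the rule and by $k \le t$. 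Putting these together, the total running time is $2^{\cO(t)}\cdot (k+1)^{\cO(t/\lambda_{\min})} \cdot 2^{\cO(t/\lambda_{\min})}\cdot n^{\cO(1)} = h(t)\, n^{\cO(1)}$, establishing membership in \fpt.
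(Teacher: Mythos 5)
Your overall strategy---color-code a bounded set of ``relevant'' voters and do bookkeeping that respects the non-linear $\lambda$-scoring---is essentially the route the paper takes (the paper colors both the $k$ committee members and the $t'\le t/\lambda_1$ contributing voters and enumerates the $2^{kt'}$ bipartite patterns between the color classes; your subset-DP with per-color multiplicity counters is a reformulation of the same idea). However, there is a genuine gap at the step where you eliminate the dependence on $k$. You claim that in a \yes-instance $k\le t$ ``since if $k>t$ we can shrink $S$,'' and you use this both to bound $\lambda_{\min}=\lambda_k$ from below and to bound the number of colors $\lceil t/\lambda_{\min}\rceil$ and the DP state space $(k+1)^{N}$ by functions of $t$ alone. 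This claim is false for PAV, the very rule the lemma is aimed at: take a single voter approving all $m$ candidates and $k=m$; a committee of size $j$ scores $\sum_{a\le j}1/a$, so reaching threshold $t$ forces $j$ to be roughly $e^{t}\gg t$, and no shrinking to size $\le t$ is possible (tiny marginal contributions are exactly what submodularity allows). Without a correct bound of the form $k\le h(t)$ your running time is only \fpt in $k+t$, not in $t$. The paper closes this for PAV by arguing that one may assume $\sum_{i=1}^{k}1/i< t$ (otherwise, essentially any $k$ candidates with nonempty approval sets already reach the threshold), which gives $k\le 2^{\cO(t)}$; an argument of this kind is indispensable and is missing from your write-up.

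Two smaller issues. First, the bound $|N(S)|\le t/\lambda_{\min}$ is not true (a single committee member may be approved by arbitrarily many voters); what is true, and what the paper uses, is that after normalizing so that $\lambda_1=1$ every covered voter contributes at least $\lambda_1$, hence some sub-collection of at most $\lceil t/\lambda_1\rceil$ covered voters already accounts for score $t$---this is the set to color, and it needs only about $t$ colors, not $t/\lambda_{\min}$ of them. Second, the soundness of per-color accounting (increment the counter of color $j$ whenever the newly added candidate has some neighbor of color $j$, and credit $\lambda$ at the new counter value) is not automatic: the candidate may be adjacent to a voter of color $j$ other than the designated one. It is rescued precisely because $\lambda$ is non-increasing: if $c_j$ selected candidates each have at least one neighbor in color class $j$, that class truly contributes at least $\lambda_1+\dots+\lambda_{c_j}$, the worst case being that all these edges land on a single voter. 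This monotonicity argument is the paper's key observation and needs to be stated explicitly; attributing it to ``colorfulness letting the DP treat voters independently'' leaves the crucial step unproved.
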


\begin{proof}
We first provide a randomized algorithm and then show that it can be easily derandomized using standard techniques.
    Consider a solution committee $O=\{o_1,o_2,\ldots o_k\}\subseteq \C$ consisting of $k$ candidates. For convenience we assume $\score_G(O)=t$. Let $t'$ be the number of voters $v$, with $f_{G,v}(O)\geq 0$. Note that $t' \leq t/\lambda_1$, hence we guess $t'$. Let $v_1,v_2,\ldots v_t'$ be those voters. In the profile graph we color the elements of $\C$ with $k$ colors and the elements of $\V$ with $t'$ colors. Let $Y[p,q]$ denote the set of all possible bipartite graphs $G=(A,B)$ where $A=\{a_1, \ldots a_p\}$ and $B=\{b_1\ldots b_q\}$. There are $2^{pq}$ such graphs since we have $2^q$ possible neighbors for each vertex in $A$. We construct $Y[k,t']$ and for every $g \in Y[k,t']$ we choose a candidate from each of the $k$ color classes with at least one neighbor in each of the color classes corresponding to its neighbor in $g$. That is for color class $i$ we choose a vertex that has neighbors in the color classes of the set $\{j \mid (a_i,b_j) \in E(g)\}$. Let $S$ be the set of chosen candidates. If $\score_G(S)\geq t$ then we return it as a solution.  A description of the algorithm is provided in Algorithm~\ref{alg:parabyt}.

\begin{algorithm}[h]
\caption{An \fpt Algorithm for \probonegen parameterized by threshold ($t$).}\label{alg:parabyt}
\textbf{Input:} A bipartite graph $G=(\C,\V,E)$, a non-increasing vector  $\lambda=(\lambda_1,\lambda_2,\ldots, \lambda_{k})$, and positive integers $k$ and $t$ \\
\textbf{Output:} A $k$-sized subset $S\subseteq \C$ such that \hide{$|S|\leq k $ and} $\score_G(S)\geq t$.

\begin{algorithmic}[1]
\State $i=0$
\For{$t' \in [t,\nicefrac{t}{\lambda_1}]$}
\State For every $c \in \C$ u.a.r assign a color from $[k]$.
\State For every $v \in \V$ u.a.r assign a color from $[t']$
\State Let $S=\emptyset$
\For{$g \in Y[k,t']$}
\For{ $i \in [k]$}
\For{$c \in \C$ such that $c$ is assigned color $i$}
\If{ $c$ has neighbors in color classes of the set  $\{j \mid (c,b_j) \in E(g)\}$ }
\State $S=S \cup \{c\}$
\State Break
\EndIf
\EndFor
\EndFor
\If{$\score_G(S)\geq t$}
\Return $S$
\EndIf
\EndFor
\EndFor
\end{algorithmic}
\end{algorithm}

We say coloring is a good coloring if $\forall i \in [k], o_i$ gets color $i$ and $\forall i \in[t'], v_i$ gets color $i$. The probability that $o_i$ gets color $i$ is $\frac{1}{k}$ and the probability that $v_i$ gets color $i$ is $\frac{1}{t'}$. Hence, the probability of good coloring is  $(\frac{1}{k})^k(\frac{1}{t'})^{t'}$. Each candidate $o_i$ is adjacent to a subset of $\{v_1,\ldots v_{t'}\}$. Let the subset be $J_i= \{v_j\mid (o_i,v_j) \in E(G)\}$ . Consider the graph induced by $O$ and the voter set $\{v_1, \ldots v_{t'}\}$. This graph, say $g$,  appears in the set $Y[k,t']$. Consider the case when we choose candidates corresponding to $g$. 
Now in case of a \yes instance with a good coloring we can choose a candidate corresponding to each $o_i $ from color class $i$ with neighbors in the color classes corresponding to colors of vertices in $J_i$. Now we show that if we are able to choose $k$ such candidates, say $S$, then $\score_G(S)\geq t$. Let $i \in [t]$ be color class of voters. Suppose $v_i$ approves $j$ candidates in $O$ then its contribution to the total score is $\sum_{a=1}^j\lambda_a $. By our algorithm, now there are $j$ candidates in $S$ which have neighbors in the color class $[i]$. The voters in color class $i$ now contribute at least $\sum_{a=1}^j\lambda_a $ since $\lambda_a$'s are non-increasing. The score contributed by each voter $v_i$ is contributed by the voters in color class $i$, hence $\score_G(S)\geq \score_G(O)\geq t$. Thus our algorithm runs in time $t'2^{kt'}n^{\OO(1)}$ and returns a solution with probability  $(\frac{1}{k})^k(\frac{1}{t'})^{t'}$. We boost the success probability to a constant factor by repeating the algorithm $k^kt'^{t'}$ times. Thus the overall running time is $k^kt'^{t'+1}2^{kt'}n^{\OO(1)}$. \\

Now we demonstrate how this algorithm implies an \fpt algorithm parameterized by $t$ under the PAV rule.

If $k \leq t$, then we directly obtain an algorithm parameterized by $t$. Now consider the case where $k \leq \delta$, where $\delta$ is the maximum degree of a vertex in $\V$. In this case, if $t \geq \sum_{i=1}^k \frac{1}{i}$, then we have $k \leq 2^t$, which again implies an \fpt algorithm with parameter $t$. Otherwise, if $t \leq \sum_{i=1}^k \frac{1}{i}$, then selecting any $k$ neighbors of the highest-degree vertex in the committee yields a valid solution, which can be found in polynomial time.

Finally, we consider the case where $k > \delta$. If $t \leq \sum_{i=1}^\delta \frac{1}{i}$, then choosing all the neighbors of the highest-degree vertex in the committee gives the desired solution in polynomial time. Otherwise, $t > \sum_{i=1}^\delta \frac{1}{i}$ implies $\delta < 2^t$.

Let $O = \{o_1, o_2, \dots, o_k\}$ be the $k$ candidates selected, and let $V = \{v_1, \dots, v_{t'}\}$, where $t' < t$, denote the voters receiving positive satisfaction (i.e., each has at least one neighbor in $O$). Thus, $O \subseteq N(V)$. Since $|N(V)| \leq |V|\delta$, we have $|O| \leq |V|\delta \leq t \cdot 2^t$, which implies $k \leq t \cdot 2^t$. Hence, in this case as well, we obtain an \fpt algorithm parameterized by $t$.

\textbf{Derandomization:}
The algorithm can be derandomized using standard techniques \cite{DBLP:books/sp/CyganFKLMPPS15}. In particular, we use an $(n, k)$-perfect hash family. An $(n, k)$-perfect hash family $\cF$ is a family of functions from $[n]$ to $k$ such that for every set $S \subseteq [n]$ of size $k$, there exists a function $f \in \cF$ that \textit{splits} $S$ \textit{evenly}. That is, for every $1 \leq j, j' \leq k$, $|f^{-1}(j) \cap S|$ and $|f^{-1}(j') \cap S|$ differ by at most $1$. For any $n, k \geq 1$, one can construct an $(n, k)$-perfect hash family of size $e^kk^{\mathcal{O}(log k)}log n$ in time $e^kk^{\mathcal{O}(log k)}n log n$~\cite{DBLP:books/sp/CyganFKLMPPS15,DBLP:conf/focs/NaorSS95}. In place of randomly coloring the voters and candidates with $k$ and $t$ colors respectively we construct $(m,k)$ and $(n,t)$-perfect hash families and run the algorithm exhaustively for all possible colorings generated by the functions in the hash families. By definition of perfect hash families it will generate a coloring where each candidate in  $O$ and each voter $v_i$ will receive distinct colors. If we check all $k!t!$ permutations of the colors we will get a good coloring. 
\end{proof}
Note that we can assume $t\leq k \Delta_C$ where $\Delta_C$ is the highest degree of a vertex in $\C$, otherwise,  it is a \no instance. Thus we get the following corollary. 
\begin{corollary}
    \probonegen admits an \fpt algorithm parameterized by $k+ \Delta_C$.
\end{corollary}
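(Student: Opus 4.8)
The plan is to package the preceding lemma (the \fpt algorithm parameterized by the threshold $t$) together with a crude upper bound on the achievable score. First I would record the inequality announced just before the corollary: for every committee $S\subseteq\C$ with $|S|\le k$,
\begin{align*}
\score_G(S) &= \sum_{v\in\V}\sum_{j=1}^{|N_G(v)\cap S|}\lambda_j \;\le\; \sum_{v\in\V}|N_G(v)\cap S|\cdot\lambda_1\\
&\le\; \sum_{v\in\V}|N_G(v)\cap S| \;=\; \sum_{c\in S}\deg_G(c) \;\le\; k\Delta_C ,
\end{align*}
where the second inequality uses that $\lambda$ is non-increasing with $\lambda_1\le 1$, and the penultimate equality is a double-counting of the edges of $G$ incident to $S$. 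Hence whenever $t>k\Delta_C$ the instance is a \no-instance, and the algorithm may answer \no outright.

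In the complementary case $t\le k\Delta_C$, the threshold is bounded by a function of the combined parameter $k+\Delta_C$. I would then simply invoke the algorithm of the previous lemma on the instance; its running time is of the form $g(t)\cdot n^{\OO(1)}$ for a computable $g$ (concretely $k^k\,t'^{\,t'+1}2^{kt'}n^{\OO(1)}$ with $t'\le t/\lambda_1$), so substituting $t\le k\Delta_C$ turns this into $h(k,\Delta_C)\cdot n^{\OO(1)}$, which is exactly the desired \fpt bound. Correctness is inherited directly from the lemma together with the observation above that instances with $t>k\Delta_C$ are trivially negative.

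The only subtlety worth spelling out is that the threshold-parameterized running time should depend on $t$ alone and not, say, on how small $\lambda_1$ is; this is fine because in this section we may (and do) normalize so that $\lambda_1=1$, giving $t'\le t\le k\Delta_C$ and a colouring/perfect-hash step running in $2^{\OO(k\Delta_C)}n^{\OO(1)}$ time. I do not expect any real obstacle here — the corollary is essentially a one-line reduction, the substantive content being the earlier lemma and the elementary score bound $\score_G(S)\le k\Delta_C$.
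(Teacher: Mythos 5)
Your proposal is correct and matches the paper's argument: the paper derives the corollary exactly by observing that any $k$-sized committee has score at most $k\Delta_C$ (so $t>k\Delta_C$ means a \no-instance) and then invoking the FPT-in-$t$ algorithm with $t\le k\Delta_C$. Your extra remarks on double-counting edges and on normalizing $\lambda_1$ just make explicit what the paper leaves implicit.
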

\section{Outlook}

In this paper, we modeled the \textsc{Multiwinner Election} problem as a graph-theoretic problem which enables us to address the problem in the $K_{d,d}$-free graph class. 
This approach captures a broader range of profiles than that of \cite{DBLP:journals/iandc/Skowron17}. Specifically, it generalizes the class of bounded approval sets, a class that admits tractable results. For \probonegen, we developed an FPT-AS and a lossy polynomial-time preprocessing procedure. To the best of our knowledge, our additive approximation algorithm and lossy preprocessing method represent novel technical contributions to the field of computational social choice theory.

Our work is just a starting point in this area, with several potential extensions. In our algorithm, we assumed that the functions are both monotone and submodular. A natural question is what happens if we relax one of these constraints. Additionally, while we focused on the approval model of elections, the next logical step is to extend our investigations to ordinal or cardinal elections. Another direction would be to incorporate fairness or matroid constraints into the voting profiles, as explored in~\cite{DBLP:conf/icalp/00020LS0U24}. Also considering diversity constraints on selected committee as studied in \cite{DBLP:conf/aaai/BredereckFILS18} could be another direction of future work.

\clearpage
\bibliographystyle{plainnat}
\bibliography{references}

\end{document}